\documentclass[a4paper,twocolumn,10pt,accepted=2020-11-26]{quantumarticle}

\pdfoutput=1
\usepackage{bbm}

\usepackage[numbers,sort&compress]{natbib}

\usepackage[utf8]{inputenc}
\usepackage[english]{babel}
\usepackage[T1]{fontenc}

\newcommand{\reed}{{\mathrm{RM}}}
\newcommand{\Res}{{\mathrm{Res}}}
\newcommand{\GA}{{\mathrm{GA}}}
\newcommand{\sperp}{{{\perp_{\rm{s}}}}}

\usepackage{graphicx,epstopdf,epsfig}
\usepackage{tikz-cd} 
\usepackage{inputenc}
\usepackage{hyperref, amsmath,  amssymb,amsbsy,amsfonts,amsthm,latexsym,amsopn,amstext, amsxtra,euscript,amscd}

\usepackage{hypernat}
\usepackage{accents}
\hypersetup{
  colorlinks   = true, %Colours links instead of ugly boxes
  urlcolor     = blue, %Colour for external hyperlinks
  linkcolor    = blue, %Colour of internal links
  citecolor   = blue %Colour of citations
}

%%%%%%%%%%%%%%%%%%%%%%%%%%%%%%%%%%

%\pagenumbering{roman}
%\usepackage[margin=1.5in]{geometry}
%%%%%%%%%%%%%%%%%%%%%%%%%%%%%%%%%%%%%%%%%%%%%%%%%%%%%%
%%%%%%%%%%%%%%%%%%%%%%%%%%%%%%%%%%%%%%%%%%%%%%%
%\newcommand{\MyNote}[1]{\marginpar{\begin{minipage}[t]{2cm}\tiny{#1}\end{minipage}}}
%%%%%%%%%%%%%%%%%%%%%%%%%%%%%%%%%%%%%%%%%%%%%%%%

\newtheorem{theo}{Theorem}%[section]
\newtheorem{prop}[theo]{Proposition}
\newtheorem{cor}[theo]{Corollary}
\newtheorem{lem}[theo]{Lemma}
\theoremstyle{definition}

\newtheorem{exa}[theo]{Example}

\newtheorem{rem}[theo]{Remark}

\def\l{\langle}
\def\r{\rangle}
%%%%%%%%%%%%%%%%%%%%%%%%%%%%%%%%%%%%%%%%%%%%%%%%%%%%%

%%%%%%%%%%%%%%%%%%%%%%%%%%%%%%%%%%%%%%%%%%%%

% mathbb commands

\newcommand{\U}{{\mathbb U}}
\newcommand{\F}{{\mathbb F}}

\newcommand{\Z}{{\mathbb Z}}

\newcommand{\C}{{\mathbb C}}

% mathcal commands

\newcommand{\cC}{{\mathcal C}}

\newcommand{\cE}{{\mathcal E}}
\newcommand{\cF}{{\mathcal F}}

\newcommand{\cH}{{\mathcal H}}

\newcommand{\cM}{{\mathcal M}}

\newcommand{\cP}{{\mathcal P}}

\newcommand{\cW}{{\mathcal W}}

% mathbf comands
\newcommand{\bA}{{\mathbf A}}
\newcommand{\bB}{{\mathbf B}}
\newcommand{\bC}{{\mathbf C}}
\newcommand{\bD}{{\mathbf D}}
\newcommand{\bE}{{\mathbf E}}
\newcommand{\bF}{{\mathbf F}}
\newcommand{\bG}{{\mathbf G}}
\newcommand{\bH}{{\mathbf H}}
\newcommand{\bI}{{\mathbf I}}

\newcommand{\bM}{{\mathbf M}}
\newcommand{\bN}{{\mathbf N}}

\newcommand{\bP}{{\mathbf P}}

\newcommand{\bS}{{\mathbf S}}
\newcommand{\bT}{{\mathbf T}}
\newcommand{\bU}{{\mathbf U}}

\newcommand{\bW}{{\mathbf W}}

\newcommand{\ba}{{\mathbf a}}
\newcommand{\bb}{{\mathbf b}}
\newcommand{\bc}{{\mathbf c}}
\newcommand{\bd}{{\mathbf d}}
\newcommand{\be}{{\mathbf e}}

\newcommand{\bv}{{\mathbf v}}
\newcommand{\bw}{{\mathbf w}}
\newcommand{\bx}{{\mathbf x}}

%%%%%%%%%%%%%%%%%%%%%%%%%%%%%%%%%%%%%%%%

\newcommand{\HW}{\cH\cW_{\!N}}
\newcommand{\Fm}{\F_2^{2m}}

\newcommand{\sbt}{\raisebox{.2ex}{\mbox{$\scriptscriptstyle\bullet\,$}}}

%%%%%%%%%%%%%%%%%%%%%%%%%%%%%%%%%%%%%%%%%%%%%%%%%%%%%%%%

\newcommand{\Sp}{\mbox{\rm Sp}}

\newcommand{\supp}{\mbox{\rm supp}}

\newcommand{\rk}{\mbox{{\rm rank\,}}}

\newcommand{\cl}{\mbox{${\rm Cliff\!}$}}
\newcommand{\im}{\mbox{\rm im}\,}

\newcommand{\rs}{\mbox{\rm rs}\,}

\newcommand{\Tr}{\mbox{\rm Tr}}

\newcommand{\hwt}{\mbox{${\rm wt}_{\rm H}$}}

\newcommand{\T}{\mbox{\rm $^{\scriptsize{\text{t}}}$}}

\newcommand{\inner}[2]{\mbox{$\langle\,{#1}\,|\,{#2}\,\rangle$}}
\newcommand{\inners}[2]{\mbox{$\langle{\,{#1}\,}|\,{#2}\,\rangle_{\rm s}$}}

\newcommand{\GL}{\mathrm{GL}}
\newcommand{\Sym}{\mathrm{Sym}}

\newcommand{\Fix}{\mathrm{Fix}}

\newcommand{\diag}{\textup{diag}}

\newcommand{\Imr}{\bI_{m|r}}
\newcommand{\Imrr}{\bI_{m|-r}}
%%%%%%%%%%%%%%%%%%%%%%%%%%%%%%%%%%%%%%%%%%%%%%%%%

\newcommand{\fourmat}[4]{\mbox{$\left[\!\!\begin{array}{cc}{#1}&{#2}\\{#3}&{#4}\end{array}\!\!\right]$}}
%%%%%%%%%%%%%%%%%%%%%%%%%%%%%%%%%%%%%%%%%%%%%%%%%

\newcounter{alp}
\newcounter{ara}
\newcounter{rom}

\newenvironment{arabiclist}{\begin{list}{(\arabic{ara})\hfill}{\usecounter{ara}
     \topsep0.4ex \labelwidth.6cm \leftmargin.6cm \labelsep0cm
     \rightmargin0cm \parsep0ex \itemsep0ex}}{\end{list}}
%%%%%%%%%%%%%%%%%%%%%%%%%%%%%%%%%%%%%%%%%%%%%%%%%%%%%%%%%%%%%%%%

\begin{document}
%\title{Some Structural Results on the Clifford Hierarchy}
%\title{Unraveling the Clifford Hierarchy via\\ Weyl Expansions}
\title{Un-Weyl-ing the Clifford Hierarchy}

\author{Tefjol Pllaha}
\affiliation{Department of Communications and Networking, Aalto University, Finland}
\email{tefjol.pllaha@aalto.fi}
\author{Narayanan Rengaswamy}
\affiliation{Department of Electrical and Computer Engineering, University of Arizona, Tucson, AZ, USA}
\email{narayananr@arizona.edu}
 \thanks{Most of this work was conducted when N.R. was with the Department of Electrical and Computer Engineering, Duke University, NC, USA}
\author{Olav Tirkkonen}
\affiliation{Department of Communications and Networking, Aalto University, Finland}
\email{olav.tirkkonen@aalto.fi}
\author{Robert Calderbank}
\affiliation{Department of Electrical and Computer Engineering, Duke University, NC, USA}
\email{robert.calderbank@duke.edu}

\maketitle

\begin{abstract}
The teleportation model of quantum computation introduced by Gottesman and Chuang (1999) motivated the development of the Clifford hierarchy.
Despite its intrinsic value for quantum computing, the widespread use of magic state distillation, which is closely related to this model, emphasizes the importance of comprehending the hierarchy.
There is currently a limited understanding of the structure of this hierarchy, apart from the case of diagonal unitaries (Cui et al., 2017; Rengaswamy et al. 2019).
We explore the structure of the second and third levels of the hierarchy, the first level being the ubiquitous Pauli group, via the Weyl (i.e., Pauli) expansion of unitaries at these levels.
In particular, we characterize the support of the standard Clifford operations on the Pauli group.
Since conjugation of a Pauli by a third level unitary produces traceless Hermitian Cliffords, we characterize their Pauli support as well.
Semi-Clifford unitaries are known to have ancilla savings in the teleportation model, and we explore their Pauli support via symplectic transvections.
Finally, we show that, up to multiplication by a Clifford, every third level unitary commutes with at least one Pauli matrix. 
This can be used inductively to show that, up to a multiplication by a Clifford, every third level unitary is supported on a maximal commutative subgroup of the Pauli group.
Additionally, it can be easily seen that the latter implies the generalized semi-Clifford conjecture, proven by Beigi and Shor (2010).
We discuss potential applications in quantum error correction and the design of flag gadgets.
\end{abstract}

%\begin{IEEEkeywords}
%Pauli group, Clifford group, symplectic transvections, semi-Clifford, involutions
%\end{IEEEkeywords}

%%%%%%%%%%%%%%%%%%%%%%%%%%%%%%%%%%%%%%%%

\section{Introduction}

Quantum computing provides a fundamentally new approach to computation by exploiting the laws of quantum mechanics that govern our universe.
In this computational model, a quantum circuit consists of a sequence of operations each of which is either a \emph{quantum gate}, characterized by a unitary matrix, or a \emph{quantum measurement}, characterized by a Hermitian matrix (i.e., an \emph{observable})~\cite{NC00}.
So, a \emph{universal} quantum computer must be capable of implementing \emph{arbitrary} unitary operations and measuring any Hermitian operator on a given set of $m$ qubits. 
In 1999, Gottesman and Chuang demonstrated that such universal quantum computing can be performed just by using the quantum teleportation protocol if one has access to certain standard resources --- Bell-state preparation, Bell-basis measurements, and arbitrary single-qubit rotations~\cite{Gottesman-nature99}.
They defined the \emph{Clifford hierarchy} as part of their proof, and this has proven to be a useful characterization of a large set of unitary operations, both in theory and practice.
In fact, in their teleportation model of computation, the level of a unitary in the hierarchy can be interpreted as a measure of complexity of implementing it.
Furthermore, this model is closely related to the currently widespread scheme of distilling ``magic'' states and injecting them via teleportation-like methods in order to fault-tolerantly execute unitary operations on qubits encoded in a quantum error-correcting code~\cite{Bravyi-pra05,Bravyi-pra12}.
Hence, it is very important to understand the structure of this hierarchy since it has important implications for fault-tolerant quantum computing.

The first level of the hierarchy is the \emph{Pauli} (or \emph{Heisenberg-Weyl}) group and the second level is the \emph{Clifford} group, which is defined as the \emph{normalizer} of the Pauli group in the unitary group.
Subsequent levels $\cC^{(k)}$ of the hierarchy, for $k \geq 2$, are defined recursively as those unitaries that map Pauli matrices to $\cC^{(k-1)}$ under conjugation~\cite{Gottesman-nature99} (see~\eqref{eq:cliff_hierarchy} for the precise definition).
While the first two levels form groups, it is known that the higher levels are only finite sets of unitary matrices (up to overall phases) and that even when $k \rightarrow \infty$, the hierarchy does not encompass all unitary matrices (see Example~\ref{ex:UoutsideCliffHrchy}).
Furthermore, each level is closed under left or right multiplication by Cliffords~\cite{ZCC08}.

It is well-known that the Pauli matrices form an orthonormal basis for all square matrices under the Hilbert-Schmidt inner product~\cite{Gottesman-phd97}.
Therefore, a natural question to consider is to determine the Pauli (i.e., Weyl) expansion of all unitaries in the Clifford hierarchy.
% It is reasonable to expect that the level of the hierarchy might have a strong connection to the Pauli expansion of elements at that level.
It is reasonable to expect that the Pauli expansion of elements at a level provides insight into the structure of the hierarchy.
Indeed, a subset of the current authors recently identified a special set of diagonal unitary matrices in the hierarchy, called \emph{Quadratic Form Diagonal (QFD)} gates, and produced formulae for their action on Pauli matrices~\cite{RCP19}.
In subsequent work, they considered QFD gates constructed as tensor products of integer powers of the ``$\bT$'' gate, $\bT := \text{diag}\left( 1, \exp\left( \frac{i\pi}{4} \right) \right)$.
% $\begin{bmatrix}
% 1 & 0 \\
% 0 & e^{i\pi/4}
% \end{bmatrix}$.
There they examined the result of conjugating Pauli matrices by such gates, and fully characterized the Pauli expansion of the (Clifford) result.
Then they used this characterization to understand when such a physical operation preserves the code subspace of a \emph{stabilizer} quantum error-correcting code~\cite{Rengaswamy-arxiv19,Rengaswamy-arxiv20}.
This is fundamental because such codes are necessary to make the quantum computer tolerate noise, and all operations on the encoded information have to be performed by such codespace-preserving physical fault-tolerant operations.
Furthermore, for universal quantum computation we need to implement at least one non-Clifford gate, and the $\bT$ gate is one of the easiest non-Clifford gates to engineer.
As a general recipe, one could replace this (tensor product) gate with any high fidelity lab operation and attempt to repeat this process to understand required code structure.

The Heisenberg-Weyl expansion considered in this paper is intimately connected with the Wigner functions~\cite{Wigner32} via the Fourier transform. The discrete counterparts are explored in~\cite{Wootters04,PhysRevA.70.062101,Gross06}. These insights have played an important role in the simulation and general understanding of magic states, as well as with non-stabilizer resources~\cite{Veitch_2014}.

In this paper, we make contributions towards a few related questions about the hierarchy.
First, surprisingly, the Pauli support (i.e., the Pauli matrices with non-zero coefficients in the Pauli expansion) of even the well-known Clifford group operations remains unknown.
(Note that conjugating a Pauli matrix by the transversal $\bT$ gate produces a Clifford gate, so the aforementioned result already calculates the Pauli expansion of certain types of Cliffords.)
Hence, we study the support of the ``standard'' Clifford operations that correspond to the standard Clifford gate-set consisting of Hadamard, Phase, Controlled-NOT (CNOT), and Controlled-$Z$ (CZ) gates.

Second, Zeng et al.~\cite{ZCC08} considered certain operations called \emph{semi-Clifford} unitaries in the hierarchy~\cite{Gross-qic07}, which have the advantage that they require fewer ancillae than general unitaries in the teleportation model of Gottesman and Chuang~\cite{Zhou-pra00}.
They also showed that any semi-Clifford $\bU$ can be expressed as $\bU = \bG_1 \bD \bG_2$, where $\bD$ is diagonal and $\bG_1, \bG_2$ are Clifford operators.
Cui et al.~\cite{CGK17} have recently characterized the diagonal unitaries in the Clifford hierarchy.
We prove a general result that provides an exact decomposition of any semi-Clifford operation in terms of diagonal gates and physical permutation operators composed of CNOTs and Pauli $X$'s.
Thus, when combined with~\cite{CGK17} and our contribution of characterizing the Pauli support of standard Cliffords, this essentially produces the Weyl expansion of semi-Cliffords.

Third, Zeng et al. conjectured in the above paper that all unitaries in $\cC^{(3)}$ are semi-Clifford and all unitaries in $\cC^{(k)}$ are \emph{generalized} semi-Clifford for any $k$.
While a semi-Clifford operation maps, by conjugation, a maximal commutative subgroup (MCS) of the Pauli group to another MCS of the Paulis, a generalized semi-Clifford operation maps the \emph{span} (i.e., complex linear combination) of a MCS to the span of another MCS.
It is well-known that for $m=1,2$ qubits all unitaries are semi-Clifford, and for $m=3$ qubits the third level is semi-Clifford, so these conjectures are for $k=3$ for all $m > 3$ and for $k \geq 4$ for all $m \geq 3$, respectively. 
Gottesman and Mochon have provided a counterexample for $\cC^{(3)}$ that disproves the semi-Clifford conjecture~\cite{Zeng-phd09}\footnote{The authors of~\cite{RCP19} were unaware of this result, and they regret reporting that this conjecture remained open.}.
Subsequently, Beigi and Shor~\cite{BS08} proved that all unitaries in $\cC^{(3)}$ are generalized semi-Clifford operations, thereby settling the conjecture for $k = 3$.
In this paper, we prove the stronger result that for any unitary $\bC$ from $\cC^{(3)}$, there exists a Clifford $\bG$ such that $\bG \bC$ is supported on a MCS of the Pauli group.
Our proof uses a much simpler induction argument based on the fact that any third level unitary must map (under conjugation) at least one Pauli to some other Pauli.

Finally, the third level of the hierarchy is of particular interest since any third level gate enables universal quantum computation when combined with the Clifford group~\cite{Gottesman-nature99,Boykin-arxiv99}.
When a $\cC^{(3)}$ gate acts by conjugation on a Pauli matrix, the result is a Hermitian Clifford, one example being the aforementioned case of choosing a $\cC^{(3)}$ operation that is a tensor product of integer powers of $\bT$.
It is well-known that Clifford transvections (that is, square roots of Hermitian Pauli matrices; see~\eqref{e-transvection}) form a different generating set for all Cliffords, compared to the standard Clifford gate set mentioned earlier~\cite{Callan,OMeara,Koenig-jmp14}.
We prove a necessary and sufficient condition for the Paulis involved in the transvection decomposition of an arbitrary Hermitian Clifford operator.
% In particular, we prove that any Hermitian Clifford is supported on a commutative subgroup of the Pauli group or a coset of it.
Since expanding the product of transvections provides the Pauli expansion of these Hermitian Cliffords, this can potentially be applied to extend the aforementioned result on characterizing stabilizer codes that support transversal $\bT$ gates to other gates from $\cC^{(3)}$.

As a different application, \emph{flag gadgets} have recently become popular as a near-term method to detect correlated faults in circuits~\cite{Chao-arxiv17a,Chao-arxiv17b,Chao-arxiv19,Chamberland-quantum18,Tansuwannont-arxiv18}.
The idea is to introduce a multi-qubit Pauli measurement before and after the circuit, using one or more ancilla qubits, such that the extra gadget acts trivially in the case of no errors but catches catastrophic errors otherwise.
A key requirement to construct flag gadgets for a specific application circuit is to determine the best Pauli measurement to apply before the circuit and identify the result of ``propagating'' the Pauli through the circuit, i.e., determine the result of conjugating the Pauli by the circuit.
The simplest case is to use a Pauli operator that commutes with the circuit. 
For this purpose, any Pauli in the centralizer/dual of the support (of the circuit/unitary) would suffice.
Hence, our aforesaid results on characterizing Pauli supports can be applied to determine the Paulis that commute with the corresponding circuit.
In particular, since flag gadgets are generally applied only to Clifford circuits, our result that any $\cC^{(3)}$ element is supported on a MCS of the Paulis, up to multiplication by a Clifford, provides a way to determine a Pauli that commutes with a non-Clifford element.
Therefore, this insight could be used to design flag gadgets beyond Clifford (subsections of) circuits.

%\nr{Do we need to add a paragraph that outlines the structure of the paper?}

%%%%%%%%%%%%%%%%%%%%%%%%%%%%%%%%%%%%%%%%%%%%

\section{Preliminaries}

\subsection{The Binary Symplectic Group}
We will denote by $\GL(n)$ and $\Sym(n)$ the groups of $n\times n$ invertible and symmetric matrices over the binary field $\F_2$, respectively. 
Addition in $\F_2$ will be denoted by $\oplus$. 
The~\emph{binary symplectic group} $\Sp(2m)\subset \GL(2m)$ is the set of $2m\times 2m$ binary matrices that preserve the~\emph{symplectic inner product} in $\Fm$:
\begin{equation}\label{e-sip}    \inners{(\ba,\bb)}{(\bc,\bd)} = \ba\bd\T \oplus \bb\bc\T = (\ba,\bb)\Omega(\bc,\bd)\T,
\end{equation}
where
\begin{equation}
    \Omega = \fourmat{\textbf{0}_m}{\bI_m}{\bI_m}{\textbf{0}_m},
\end{equation}
A matrix $\bF=\fourmat{\bA}{\bB}{\bC}{\bD}\in\Sp(2m)$ satisfies $\bF\Omega\bF\T = \Omega$, which in turn is equivalent with $\bA\bB\T, \bC\bD\T \in \Sym(m)$ and $\bA\bD\T \oplus \bB\bC\T = \bI_m$.

In $\Sp(2m)$ we distinguish two subgroups:
\begin{align}
    S_D & := \left\{\bF_D(\bP) = \fourmat{\bP}{\textbf{0}_m}{\textbf{0}_m}{\bP^-\T}\,\,\middle| \,\,\bP\in \GL(m)\right\}\\ & \cong \GL(m),\nonumber\\
    S_U & :=  \left\{\bF_U(\bS) = \fourmat{\bI_m}{\bS}{\textbf{0}_m}{\bI_m}\,\,\middle| \,\,\bS\in \Sym(m)\right\} \label{e-S_U}\\ &\cong \Sym(m)\nonumber.
\end{align}

Then every $\bF\in \Sp(2m)$ can be \emph{Bruhat-decomposed} \cite{MR18,PTC20} as
\begin{equation}\label{e-Bruhat1}
    \bF = \bF_D(\bP_1)\bF_U(\bS_1)\bF_{\Omega}(r)\bF_U(\bS_2)\bF_D(\bP_2),
\end{equation}
where 
\begin{equation}
    \bF_{\Omega}(r) = \fourmat{\Imrr}{\Imr}{\Imr}{\Imrr},
\end{equation}
with $\Imr$ being the block matrix with $\bI_r$ in upper left corner and 0 elsewhere, and $\Imrr = \bI_m - \Imr$. 
Here $r = \rk(\bC)$. The semidirect product $S$ of $S_D$ and $S_U$ corresponds to $r = 0$, that is, symplectic matrices with $\bC = {\bf 0}$. 
Let $\widetilde{S}$ be the subgroup of $S$ consisting of matrices $\bF_D(\bP)\bF_U(\bS)$ with $\bP$ upper triangular. Then $\widetilde{S}$ has size $2^{m(m-1)/2} \cdot2^{m(m+1)/2}= 2^{m^2}$ and is a 2-Sylow subgroup of $\Sp(2m)$ that contains $S_U$.

Another type of decomposition of $\Sp(2m)$ can be achieved via~\emph{symplectic transvections} $\bT_\bv := \bI_{2m} + \Omega\bv\T\bv, \, \bv \in \Fm$. 
Such matrix acts on $\Fm$ as $\bx \longmapsto \bx+ \inners{\bv}{\bx}\bv$, and thus $\bT^2_\bv = \bI_{2m}$. 
In general, $\bF\in \Sp(2m)$ is said to be an \emph{involution} if $\bF^2 = \bI_{2m}$ and is said to be \emph{hyperbolic} if $\inners{\bv}{\bv\bF} = 0$ for all $\bv\in \Fm$. It is well-known that symplectic transvections generate $\Sp(2m)$~\cite{Callan,OMeara,Koenig-jmp14}. 
It is shown there that a non-hyperbolic involution can be written as product of $r$ transvections $\bT_{\bv_1},\ldots,\bT_{\bv_r}$, where $r = 2m - \dim(\Fix(\bF)) = \dim(\Res(\bF))$ and 
\begin{align}
    \Fix(\bF) & := \ker(\bI\oplus \bF) := \{\bv\in \Fm\mid \bv = \bv\bF\},\label{e-fix}\\
    \Res(\bF) & := \rs(\bI \oplus \bF) := \{\bv\oplus\bv\bF \mid \bv\in \Fm\}.\label{e-res}
\end{align}
Throughout the paper $\rs(\bullet)$ will denote the row space of a matrix. Note here that, by definition, $\Fix(\bF)$ and $\Res(\bF)$ are dual of each other. The vectors $\bv_1,\ldots,\bv_r \in \Res(\bF)$ must be independent, in which case we say that corresponding transvections are independent. 
On the other hand, a hyperbolic involution can be written as a product of $r+1$ transvections ($r$ as above), $r$ of which are independent and the additional one is dependent of the others.
We will see that the \emph{residue} space $\Res(\bF)$ of a symplectic $\bF$ is intimately connected with the support~\eqref{e-supp} of the corresponding Clifford $\bG$. On the other hand, the \emph{fixed} space $\Fix(\bF)$ being the dual of $\Res(\bF)$ is intimately connected with the Paulis that commute with $\bG$. Transvections are the simplest form of involutions in $\Sp(2m)$ and will play a central role throughout the paper for the simple reason that they correspond to square roots of Hermitian Paulis; see~\eqref{e-transvection} and \eqref{e-ConjTrans}.

%%%%%%%%%%%%%%%%%%%%%%%%%%%%%%%%%%%%%%%%%
\subsection{Quantum Computation}
Fix $N = 2^m$. The standard basis vectors of $\C^N$ will be indexed by binary vectors and denoted as kets, that is, $\be_\bv = |\bv\r,\bv\in\F_2^m$, will have 1 in the position indexed by $\bv$ and 0 else. The \emph{Heisenberg-Weyl} group is defined as
\begin{equation}
    \cH\cW_N := \{i^k\bD(\ba,\bb)\mid \ba,\bb\in \F_2^m, k \in \Z_4\}\subset \U(N),
\end{equation}
where
\begin{equation}
    \bD(\ba,\bb):|\bv\r\longrightarrow (-1)^{\bb\bv\T}|\bv\oplus\ba\r.
\end{equation} 
We will denote by $\cP\cH\cW_N:=\cH\cW_N/\{\pm\bI_N,\pm i\bI_N\}$ the ~\emph{projective} Heisenberg-Weyl group. Directly by definition, we have
\begin{equation}\label{e-Dab}
    \bD(\ba,\bb)\bD(\bc,\bd) = (-1)^{\bb\bc\T}\bD(\ba\oplus\bc,\bb\oplus\bd).
\end{equation}
We will also define $\bE(\ba,\bb):= i^{\ba\bb\T}\bD(\ba,\bb)$, which constitute the Hermitian matrices in $\cH\cW_N$. If follows by~\eqref{e-Dab} that such matrices satisfy
\begin{equation}\label{e-Eab}
    \bE(\ba,\bb)\bE(\bc,\bd) = i^{\bb\bc\T - \ba\bd\T}\bE(\ba+\bc,\bb+\bd),
\end{equation}
where we view all binary vectors as integer vectors and operations are done modulo 4; see ~\cite[Rem.~1]{RCP19} for the meaning of $\bE(\ba,\bb)$ with $(\ba,\bb)\in \Z_4^{2m}$. If the arithmetic of the arguments of operators $\bE(\ba,\bb)$ were to be done modulo 2 one would have
\begin{align}
& \bE(\ba,\bb) \bE(\bc,\bd) \nonumber \\
    & = (-1)^{\scriptsize\inners{(\ba,\bb)}{(\bc,\bd)}} \bE(\bc,\bd) \bE(\ba,\bb) \\
\label{eq:Eab_identity}
    & = i^{\bb \bc\T - \ba \bd\T} \bE(\ba+\bc, \bb+\bd) \\
    & = i^{\bb \bc\T - \ba \bd\T} \bE(\ba+\bc, (\bb \oplus \bd) + 2 (\bb \ast \bd) ) \\
    & = i^{\bb \bc\T - \ba \bd\T} (-1)^{(\ba + \bc) (\bb \ast \bd)\T} \bE( (\ba \oplus \bc) + 2 (\ba \ast \bc), \bb \oplus \bd) \\
    & =  i^{\bb \bc\T - \ba \bd\T} (-1)^{(\ba \oplus \bc) (\bb \ast \bd)\T + (\bb \oplus \bd) (\ba \ast \bc)\T} \bE(\ba \oplus \bc, \bb \oplus \bd).
\end{align}
Above, the asterisk stands for the coordinate-wise product. 
We see that binary arithmetic only ever introduces an additional sign. Thus when the sign is not relevant (e.g.,~\eqref{e-cliff}) we will stick to binary arithmetic.
\begin{rem}\label{R-DEab}
From~\eqref{e-Dab} we have that $\bD(\ba,\bb)$ and $\bD(\bc,\bd)$ commute iff $\inners{(\ba,\bb)}{(\bc,\bd)} = 0$, and otherwise they anticommute.
Similarly,~\eqref{e-Eab} implies $\bE(\ba,\bb)\bE(\bc,\bd) = \pm\bE(\ba+\bc,\bb+\bd)$ if $\inners{(\ba,\bb)}{(\bc,\bd)} = 0$ and $\bE(\ba,\bb)\bE(\bc,\bd) = \pm i\bE(\ba+\bc,\bb+\bd)$ otherwise.
\end{rem}

A~\emph{stabilizer} is a commutative subgroup of $\cH\cW_N$ generated by Hermitian matrices of form $\pm\bE(\ba,\bb)$ that does not contain $-\bI_N$. Thus either $\bE$ or $-\bE$ belong to a stabilizer, but not both.
We will write $S = \bE(\bA,\bB)$ if the stabilizer $S$ is generated by $\bE(\ba_1,\bb_1),\ldots \bE(\ba_k,\bb_k)$, where $\bA$ and $\bB$ are $k\times m$ matrices obtained by stacking $\ba_i$'s and $\bb_i$'s. 
Since $S$ is abelian, the matrix $\bC = (\bA\,\,\bB)$ satisfies $\bC\Omega\bC\T = {\bf 0}$, and thus the row space of $\bC$ is a self-orthogonal (isotropic) subspace of $\Fm$, with respect to the symplectic inner product. A~\emph{maximal stabilizer}, or a ~\emph{maximal commutative subgroup} (MCS) is a stabilizer of size $2^m$. 
Of particular interest are MCSs
\begin{align}
    X_N & = \bE(\bI_m, {\bf 0}_m) = \{\bE(\ba,{\bf 0}) \mid \ba \in \F_2^m\}, \\
    Z_N & = \bE({\bf 0}_m,\bI_m) = \{\bE({\bf 0},\bb) \mid \bb \in \F_2^m\}.
\end{align}
We will refer to their elements as \emph{X stabilizers} and \emph{Z stabilizers}, respectively. Naturally, we identify $X$ stabilizers with vectors $(\ba,{\bf 0}) \in \Fm$ and $Z$ stabilizers with vectors $({\bf 0},\bb)\in \Fm$.

Let $S$ be a stabilizer group of size $2^k$. For $\varepsilon \in \{1,-1\}$, the complex vector space 
\begin{equation}\label{e-qecc}
    \cF_\varepsilon(S) := \{ v \in \C^N\mid \bE v = \varepsilon v \text{ for all } \bE\in S\}
\end{equation}
has dimension $2^{m-k}$. In literature, $\cF_+(S)$ is known as the $[\![m,m-k]\!]$ \emph{stabilizer code associated to $S$}~\cite{NC00}, which encodes $m-k$ \emph{logical} qubits to $m$ \emph{physical} qubits. 
It follows that a MCS $S$ defines a $[\![m,0]\!]$ stabilizer code, that is, $\dim (\cF_\varepsilon(S)) = 1$. 
For this reason $|\psi_\varepsilon\r := \cF_\varepsilon(S)$ 
%\nr{$|\psi_\varepsilon\r \in \cF_\varepsilon(S)$?} \tp{if you use $\in$ means that you care for phases, and if you use $:=$ means that you dont. Your call!}
is called a \emph{stabilizer state}. 
Let $S = \l\bE_1,\ldots,\bE_k\r$ be the stabilizer group generated by the commuting Hermitian Paulis $\{\bE_1,\ldots,\bE_k\}$.  For $\bd\in \F_2^k$, we will denote $S_\bd:=\l (-1)^{d_1}\bE_1,\ldots, (-1)^{d_k}\bE_k\r$. Then 
\begin{equation}
    \Pi_\bd := \prod_{n=1}^k\frac{\bI_N + (-1)^{d_n}\bE_n}{2} = \frac{1}{2^k}\sum_{\bE\in S_\bd}  \bE
\end{equation}
%\tp{Arent the $``\epsilon"$ already included in $S_\bd$?}
is a projection onto $\cF_+(S_\bd)$, which in turn gives a \emph{resolution} of the identity~\cite[Sec. 10.5]{NC00}:
\begin{equation}
    \sum_{\bd\in \F_2^k}\Pi_\bd = \bI_N.
\end{equation}
%%%%%%%%%%%%%%%%%%%%%%%%%%%%%%%%%%%%%%%%%%

\subsection{The Clifford Hierarchy}

The~\emph{Clifford hierarchy} $\{\cC^{(k)},k \geq 1\}$ is defined recursively, where the first level is the Heisenberg-Weyl group, and higher levels are defined by
\begin{equation}
\label{eq:cliff_hierarchy}
    \cC^{(k)} = \{\bU\in\U(N)\mid \bU \cH\cW_N\bU^\dagger \subset \cC^{(k-1)}\}.
\end{equation}
By definition, the~\emph{Clifford group} $\cl_N$ is the second level of the hierarchy up to overall phases, that is $\cl_N := \cC^{(2)}/\U(1)$. 
The following example shows that the Clifford hierarchy does not exhaust $\U(N)$ and also motivates the Weyl expansion.
\begin{exa}
\label{ex:UoutsideCliffHrchy}
Set $\bE_1 = \bE(010,010),\bE_2 = \bE(011,001),\bE_3 = \bE(001,111), \bE_4 = \bE(101,011)$. Then $\bW = (\bE_1+\bE_2+\bE_3+\bE_4)/2$ is easily seen to be outside of $\cl_N$. Further, set $\bE = \bE(100,000)$. We have $\bW\bE\bW^\dagger = \bE_3\bW\bE$ and $(\bE_3\bW\bE)\bE(\bE_3\bW\bE)^\dagger = -\bE\bE_3\bW$. Thus, since multiplication by Paulis (or even Cliffords) preserves the level, iterative conjugation cannot bring $\bE$ up to the same level as $\bW$.
\end{exa}

Let $\{\be_1,\ldots,\be_{2m}\}$ be the standard basis of $\Fm$, and consider $\bG\in \cl_N$. Let $\bc_i\in \Fm$ be such that 
\begin{equation}
    \bG\bE(\be_i)\bG^\dagger = \pm\bE(\bc_i).
\end{equation}
Then the matrix $\bF_{\bG}$ whose $i$th row is $\bc_i$ is a symplectic matrix such that
\begin{equation}\label{e-cliff}
    \bG\bE(\bc)\bG^\dagger = \pm\bE(\bc\bF_{\bG})
\end{equation}
for all $\bc\in \Fm$.
%Based on~\eqref{e-cliff} we obtain 
We thus have a group homomorphism 
\begin{equation}\label{e-Phi}
    \Phi : \cl_N\longrightarrow \Sp(2m),\, \bG\longmapsto \bF_{\bG}.
\end{equation}
In addition, $\Phi$ is surjective with kernel $\ker \Phi = \cP\HW$~\cite{RCKP18}, and thus $\cl_N/\cP\cH\cW_N \cong \Sp(2m)$.

Given the decomposition~\eqref{e-Bruhat1}, one is naturally interested on preimages of respective symplectic matrices via $\Phi$. Namely, the unitary matrices
\begin{align}\label{e-GDP}
    \bG_D(\bP) & := |\bv\r\longmapsto |\bv\bP\r,\nonumber\\
    \bG_U(\bS) & := \diag\left(i^{{\bv\bS\bv\T} \mod 4}\right)_{\bv\in\F_2^m}, \\
    \bG_{\Omega}(r) & :=(\bH_2)^{\otimes r}\otimes \bI_{2^{m-r}}\nonumber,
\end{align}
where $\bH_2$ is the $2\times 2$ Hadamard matrix, correspond to $\bF_D(\bP),\bF_U(\bS),$ and $\bF_\Omega(r)$, respectively~\cite{CCKS97}. Strictly speaking, a preimage $\Phi^{-1}(\bF)$ is meant up to $\cH\cW_N$ (and up to a eighth root of unity which we have disregarded throughout~\cite{CRSS98}). 

\begin{rem}
Since $\Phi$ is a homomorphism we have that $\Phi(\bG^\dagger) = \bF_\bG^{-1}$. It follows that if $\bG\in\cl_N$ is Hermitian then $\bF_\bG$ is a symplectic involution. Conversely, if $\bF$ is a symplectic involution then $\bG = \Phi^{-1}(\bF)$ satisfies $\bG^2 \in \cH\cW_N$.
\end{rem}

We will call $\bG\in \cl_N$ a \emph{Clifford transvection} if $\Phi(\bG)$ is a symplectic transvection. For $\bv \in \Fm$ define\footnote{Note that $(\bI_N+ i^\lambda\bE(\bv))/\sqrt{2}$ is unitary iff $\lambda = 1,3$, and otherwise it is a projection.} 
\begin{align}\label{e-transvection}
\bG_\bv:= \frac{\bI_N \pm i\bE(\bv)}{\sqrt{2}}. 
\end{align}
For $\bW \in \cH\cW_N$ we have
\begin{align}\label{e-ConjTrans}
    \bG_\bv\bW\bG_\bv^\dagger = \begin{cases}\bW, & \text{ if } \bW\bE(\bv) = \bE(\bv)\bW, \\ \mp i\bW\bE(\bv), & \text{ if } \bW\bE(\bv) = -\bE(\bv)\bW.\end{cases}
\end{align}
It follows that $\Phi(\bG_\bv) = \bT_\bv$, and any Clifford transvection is of this form (up to $\cH\cW_N$).

In addition, we mentioned that $\Sp(2m)$ is generated by transvections, thus $\bG\in\cl_N$ is a product of Clifford transvections. We have proved the following.
\begin{prop}\label{P-GT}
Any Clifford matrix $\bG\in \cl_N$ can be written as
\begin{equation}\label{e-GT}
    \bG = \bE_0\prod_{n = 1}^k\frac{\bI_N + i\bE_n}{\sqrt{2}} = \frac{\bE_0}{\sqrt{|S|}}\sum_{\bE\in S}\alpha_\bE\bE,
\end{equation}
where $\bE_0\in \cH\cW_N, S = \l\bE_1,\ldots,\bE_k\r$, and $\alpha_\bE \in \C$.
\end{prop}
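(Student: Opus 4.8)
The plan is to prove Proposition~\ref{P-GT} in two stages: first establish the product-of-transvections form, then expand that product to obtain the sum form. For the first stage, recall from the preceding discussion that $\Sp(2m)$ is generated by symplectic transvections~\cite{Callan,OMeara,Koenig-jmp14}, so write $\bF_\bG = \bT_{\bv_1}\cdots\bT_{\bv_k}$ for suitable $\bv_1,\ldots,\bv_k\in\Fm$. Using~\eqref{e-ConjTrans} we know that the Clifford transvection $\bG_{\bv_i} = (\bI_N + i\bE(\bv_i))/\sqrt{2}$ satisfies $\Phi(\bG_{\bv_i}) = \bT_{\bv_i}$. Since $\Phi$ is a homomorphism with kernel $\cP\HW$~\eqref{e-Phi}, the product $\bG_{\bv_1}\cdots\bG_{\bv_k}$ has the same image under $\Phi$ as $\bG$, hence differs from $\bG$ by an element of $\cH\cW_N$; calling this element $\bE_0$ (absorbing the harmless phase ambiguity noted after~\eqref{e-GDP}) gives $\bG = \bE_0\prod_{n=1}^k (\bI_N + i\bE_n)/\sqrt{2}$ with $\bE_n := \bE(\bv_n)$.

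For the second stage, I would expand the product $\prod_{n=1}^k(\bI_N + i\bE_n)/\sqrt{2}$ by multiplying out. Each term in the expansion is $2^{-k/2} i^{|J|}\bE_{j_1}\cdots\bE_{j_\ell}$ for an ordered subset $J = \{j_1 < \cdots < j_\ell\}\subseteq\{1,\ldots,k\}$. By~\eqref{e-Eab}, a product of Hermitian Paulis $\bE_{j_1}\cdots\bE_{j_\ell}$ equals a scalar (a power of $i$) times $\bE(\bv_{j_1} + \cdots + \bv_{j_\ell})$, so every term is a scalar multiple of some $\bE\in S$, where $S := \langle\bE_1,\ldots,\bE_k\rangle$ is the group generated by the $\bE_n$. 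Collecting terms by the underlying Pauli $\bE$, we obtain $\bG = 2^{-k/2}\bE_0\sum_{\bE\in S}\alpha_\bE\bE$ for complex coefficients $\alpha_\bE$. To match the stated normalization $\bE_0/\sqrt{|S|}$, I would observe that $S$ may have order strictly less than $2^k$ only if the $\bv_n$ are linearly dependent, and in the decomposition coming from the transvection generators one can take the $\bv_n$ spanning $\Res(\bF_\bG)$ to be independent (per the earlier remarks on independent transvections); then $|S| = 2^k$ and the constants agree. Alternatively, one simply notes that the coefficients $\alpha_\bE$ absorb any such normalization discrepancy, since the statement only asserts existence of some $\alpha_\bE\in\C$.

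The main obstacle is bookkeeping: ensuring the phases produced by reordering and multiplying the $\bE_n$ via~\eqref{e-Eab} are correctly tracked so that each $\alpha_\bE$ is well-defined (i.e., the contributions from different ordered subsets $J$ with the same vector sum $\sum_{j\in J}\bv_j$ are combined consistently). This is purely a matter of careful computation with the cocycle in~\eqref{e-Eab} and Remark~\ref{R-DEab}, and does not require any new idea. A secondary subtlety is the eighth-root-of-unity ambiguity in $\Phi^{-1}$: I would handle this exactly as the paper does elsewhere, by working modulo overall phase and noting it can be folded into $\bE_0$ or into the $\alpha_\bE$'s without affecting the statement. No genuinely hard step arises; the proposition is essentially a repackaging of the transvection generation of $\Sp(2m)$ together with the multiplication rule~\eqref{e-Eab} for Hermitian Paulis.
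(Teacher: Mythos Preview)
Your proposal is correct and matches the paper's own argument essentially line for line: the paper derives the proposition directly from the facts stated just before it, namely that $\Phi(\bG_\bv)=\bT_\bv$ for the Clifford transvections~\eqref{e-transvection}, that $\Sp(2m)$ is generated by symplectic transvections, and that $\ker\Phi=\cP\HW$, so any $\bG\in\cl_N$ is a Pauli times a product of Clifford transvections; expanding this product via~\eqref{e-Eab} yields the sum form. Your additional remarks on the normalization $\sqrt{|S|}$ versus $2^{-k/2}$ and on the eighth-root ambiguity are consistent with how the paper treats these points elsewhere.
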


One of the goals of this paper is to determine the $\alpha_\bE$'s produced by Clifford matrices. In particular we will see that if $\bE_1,\ldots,\bE_k$ are independent then $\alpha_\bE \in \{\pm1,\pm i\}$; see~\eqref{e-G}-\eqref{e-G1}. If $\bG$ is Hermitian then $\bF_\bG$ is an involution, and thus $k\in \{r,r+1\}$ where $r = 2m - \dim(\Fix(\bF_\bG))$ with at least $r$ transvections being independent.

%%%%%%%%%%%%%%%%%%%%%%%%%%%%%%%%%%%%%%%%%%%%%%%%%%

\section{Support of the Clifford Group}\label{sec-support}

The set $\cE_N = \{\bE(\bc) \mid \bc\in\Fm\}$ is an orthonormal basis for the vector space $\cM_N(\C)$ of $N\times N$ complex matrices 
%\nr{(not defined yet?)} 
with respect to the Hermitian inner product
\begin{equation}
    \inner{\bM}{\bN}:= \frac{1}{N}\Tr(\bM^\dagger\bN).
\end{equation}
Thus any matrix $\bM\in \cM_N(\C)$ is a linear combination
\begin{equation}\label{e-sum}
    \bM = \sum_{\bc\in\Fm}\alpha_{\bc} \bE(\bc), \quad \alpha_\bc = \inner{\bE(\bc)}{\bM}\in \C.
\end{equation}
We are interested in sums of Pauli matrices that yield Clifford matrices. The \emph{support} of $\bM\in \cM_N(\C)$ as in~\eqref{e-sum} with respect to $\cE_N$ is defined as 
\begin{align}\label{e-supp}
    \supp(\bM)  & := \{\bE(\bc)\in\HW\mid \alpha_\bc \neq 0\} \\ & \cong \{\bc\in\Fm\mid \alpha_\bc \neq 0\}.
\end{align}
When dealing with the support, we will conveniently switch between the two equivalent definitions. Thus $\bE(\bc)\in \supp(\bM)$ iff $\Tr(\bM^\dagger\bE(\bc))\neq 0$. We say in this case that $\bM$ is \emph{supported} on $\supp(\bM)$.
\begin{rem}
\begin{arabiclist}
\item Since $\bE(\bc)$ differs from $\bD(\bc)$ only by a factor $i^k$ we see that $\Tr(\bM^\dagger\bE(\bc)) \neq 0$ iff $\Tr(\bM^\dagger\bD(\bc)) \neq 0$. Thus, to avoid the additional scaling factor, we will use matrices $\bD(\bc)$ when computing supports/traces.
\item It follows directly by the definition of support and~\eqref{e-Dab} that the support of $\bM\bD(\bx)$ (or $\bD(\bx)\bM$) is just the translate $\{\bx\} + \supp(\bM)$ for all $\bx\in \Fm$.
\end{arabiclist}
\end{rem}

It is clear that $\cH\cW_N$ is supported on singletons. It follows that a unitary $\bM$ is a Clifford matrix iff $\bM \bE\bM^\dagger$ is supported on a singleton for all $\bE\in \cH\cW_N$. On the other hand, for $\bG\in \cl_N$, we have $|\supp(\bM)| = 2$ iff $\bG$ is a Clifford transvection (up to $\cH\cW_N$). In general, we have the following immediate consequence of Proposition~\ref{P-GT}.
\begin{cor}\label{cor}
Any Clifford matrix $\bG$ is supported either on a group $S$ or on a coset $\bE_0S$ depending on whether $\bG$ has trace or not.
\end{cor}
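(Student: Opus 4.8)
The plan is to read the statement straight off the transvection decomposition of Proposition~\ref{P-GT}. Write $\bG = \frac{\bE_0}{\sqrt{|S|}}\sum_{\bE\in S}\alpha_\bE\bE$ with $\bE_0\in\cH\cW_N$ and $S=\langle\bE_1,\dots,\bE_k\rangle$. Under the identification of $\cP\cH\cW_N$ with $(\Fm,\oplus)$ the group $S$ is the subspace $V=\langle\bv_1,\dots,\bv_k\rangle\subseteq\Fm$ spanned by the transvection vectors, and $\bE_0$ corresponds to some $\bc_0\in\Fm$. Replacing each $\bE(\bc)$ by the proportional $\bD(\bc)$ (which leaves supports unchanged), relation~\eqref{e-Dab} shows that $\bD(\bc)\mapsto\bE_0\bD(\bc)$ is, up to a nonzero scalar, the translation $\bD(\bc)\mapsto\bD(\bc_0\oplus\bc)$; hence $\bE\mapsto\bE_0\bE$ maps $S$ bijectively onto the coset $\bE_0 S\cong\bc_0\oplus V$. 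Expanding the sum therefore yields $\supp(\bG)\subseteq\bE_0 S$, i.e.\ every Clifford is supported on a coset of the group $S$.

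Next I would pin down the case distinction via the trace. Since $\bD(\mathbf 0)=\bI_N$, we have $\Tr(\bG)\neq 0$ precisely when $\bI_N\in\supp(\bG)$, equivalently when $\mathbf 0\in\bc_0\oplus V$. Thus if $\bG$ has a trace then $\bc_0\in V$, so $\bE_0 S=S$ and $\bG$ is supported on the group $S$; and if $\bE_0\notin S$ as a projective Pauli, the coset $\bE_0 S$ misses $\bI_N$, whence $\supp(\bG)$ does too and $\Tr(\bG)=0$. This is the asserted dichotomy, and it is essentially all that the corollary claims.

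I would then add the sharper statement that $\supp(\bG)$ \emph{equals} $S$ (resp.\ $\bE_0 S$), so that the support is literally a group or a coset. For this one chooses the transvections $\bE_1,\dots,\bE_k$ independent, so that $|S|=2^k$, and invokes the forthcoming coefficient formulas~\eqref{e-G}--\eqref{e-G1}, which give $\alpha_\bE\in\{\pm1,\pm i\}$ for every $\bE\in S$; no cancellation occurs, so the full coset is attained. An independent decomposition is available for Hermitian $\bG$ from the involution count $k\in\{r,r+1\}$ recalled above, and more generally from the transvection-generation theory for $\Sp(2m)$ cited earlier, after discarding redundant transvections.

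The only real obstacle lies in this last refinement: Proposition~\ref{P-GT} by itself merely bounds the support by $\bE_0 S$, and a priori interference among the (up to) $2^k$ Pauli terms could kill coefficients and spoil the group/coset shape of the support; excluding this is exactly what the explicit evaluation of the $\alpha_\bE$ — together with the reduction to an independent generating set of transvections — is for. The containment-plus-trace argument above, by contrast, needs nothing beyond~\eqref{e-Dab} and is genuinely immediate.
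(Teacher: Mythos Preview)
Your proof follows the paper's line exactly: invoke the transvection form from Proposition~\ref{P-GT}, read off $\supp(\bG)\subseteq\bE_0S$, and use the tracelessness of nontrivial Paulis to tie $\Tr(\bG)\neq0$ to $\bE_0\in S$. The paper's own proof is in fact terser than yours---it simply asserts $\Tr(\bG)\neq0\iff\bE_0\in S$ and $\supp(\bG)=\bE_0S$ in two sentences, with the nonvanishing of the $\alpha_\bE$ left to the same forward reference to~\eqref{e-G}--\eqref{e-G1} that you invoke.

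Two small overclaims are worth tightening. Your ``equivalently when $\mathbf 0\in\bc_0\oplus V$'' is, at that stage, only the forward implication (from $\bI_N\in\supp(\bG)\subseteq\bE_0S$); the reverse direction is precisely the nonvanishing of the relevant $\alpha_\bE$ that you postpone to the next paragraph, so the two clauses of your ``dichotomy'' sentence are actually contrapositives of one another rather than two independent directions. And ``discarding redundant transvections'' is not always possible: hyperbolic involutions such as CNOT (Example~\ref{ex-CNOT}) genuinely require $r{+}1$ transvections spanning only an $r$-dimensional space, so the independent-transvection route via~\eqref{e-G}--\eqref{e-G1} does not directly cover every Clifford. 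The paper shares this gap and treats the dependent case separately in a later remark; for the corollary as stated, your containment-plus-trace argument is what carries the weight and is sound.
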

\begin{proof}
Observe 
%\nr{This has not been stated before in this paper, so may be use ``Observe'' instead?}
that $\bE\in\cH\cW_N$ is traceless unless $\bE = \bI_N$. Thus, for $\bG$ as in~\eqref{e-GT} we have that $\Tr(\bG)\neq 0$ iff $\bE_0 \in S$. It also follows that $\supp(\bG) = \bE_0S$.
\end{proof}
%%%%%%%%%%%%%%%%%%%%%%%%%%%%%%%%%%%%%%%%%%%%%%
\begin{rem}
One can easily construct non-Clifford matrices supported on a subgroup. For instance, it follows easily from Proposition~\ref{P-lc} that $\bT^{\otimes m}$ is supported on the subspace $\{{\bf 0}\}\times \F_2^m$, or alternatively, on the subgroup $Z_N$ of diagonal Paulis. Thus, Corollary~\ref{cor} does not completely characterize $\cl_N$. In fact, for any $\bG\in \cl_N$, we have 
\begin{align}
    \bE(\bc) \in \supp(\bG) & \iff \Tr\big((\bG\bE(\bc)\bG^\dagger)\bG\big) \neq 0 \\
    & \iff \Tr\big(\bE(\bc\bF_\bG)\bG\big) \neq 0\\
    & \iff \bE(\bc\bF_\bG) \in \supp(\bG),
\end{align}
which implies that the support of $\bG$ is an~\emph{invariant} subspace of $\bF_\bG$.
\end{rem}

By definition, in order to understand $\bG\in \cl_N$ it is sufficient to understand its action on $\cH\cW_N$, which thanks to~\eqref{e-Phi} can be understood via the action of the corresponding symplectic matrix $\bF_\bG$ on $\Fm$. Consider the fixed space $\Fix(\bF_\bG)$ from~\eqref{e-fix}. Then $\bc \in \Fix(\bF_\bG)$ iff 
\begin{equation}
    \bG\bE(\bc)\bG^\dagger = \pm \bE(\bc\bF_\bG) = \pm \bE(\bc),
\end{equation}
that is, iff $\bE(\bc)$ either commutes or anticommutes with $\bG$. Let us now consider the Pauli matrices that commute with $\bG$, that is
\begin{equation}
    C_\bG = \{\bc\in \Fix(\bF_\bG)\mid \bG\bE(\bc)\bG^\dagger = \bE(\bc)\}.
\end{equation}
It is then clear that the quotient $\Fix(\bF_\bG)/C_\bG$ captures the Pauli matrices that anticommute with $\bG$. We will denote by $(\sbt)^\sperp$ the dual w.r.t. the symplectic inner product~\eqref{e-sip}. With this notation we have the following.

\begin{prop}\label{P-supp}
$\supp(\bG) \subseteq C_\bG^\sperp$. 
\end{prop}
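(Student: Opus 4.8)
The plan is to show that any $\bE(\bc)$ in the support of $\bG$ must commute with every Pauli that commutes with $\bG$, i.e., that $\bc \in C_\bG^\sperp$. The natural tool is the previous remark, which shows that $\supp(\bG)$ is an invariant subspace of $\bF_\bG$: if $\bE(\bc) \in \supp(\bG)$, then $\bE(\bc\bF_\bG) \in \supp(\bG)$ as well. So I would first fix $\bc$ with $\bE(\bc) \in \supp(\bG)$ and an arbitrary $\bv \in C_\bG$, and aim to prove $\inners{\bc}{\bv} = 0$.

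\medskip

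First I would use the defining property of $C_\bG$: since $\bv \in C_\bG$, we have $\bG\bE(\bv)\bG^\dagger = \bE(\bv)$, equivalently $\bG\bE(\bv) = \bE(\bv)\bG$. Now I compute the coefficient $\alpha_\bc = \inner{\bE(\bc)}{\bG}$ in two ways. On the one hand, $\alpha_\bc = \frac{1}{N}\Tr(\bE(\bc)^\dagger \bG)$. On the other hand, insert $\bE(\bv)^\dagger \bE(\bv) = \bI_N$ and use the commutation relation together with cyclicity of the trace:
\begin{align}
    \Tr(\bE(\bc)^\dagger\bG) & = \Tr(\bE(\bv)^\dagger\bE(\bc)^\dagger\bE(\bv)\bG) \nonumber \\
    & = (-1)^{\inners{\bc}{\bv}}\Tr(\bE(\bc)^\dagger\bE(\bv)^\dagger\bE(\bv)\bG) \nonumber \\
    & = (-1)^{\inners{\bc}{\bv}}\Tr(\bE(\bc)^\dagger\bG),
\end{align}
where the middle step uses that $\bE(\bv)$ and $\bE(\bc)$ commute up to the sign $(-1)^{\inners{\bc}{\bv}}$ (Remark~\ref{R-DEab}), so that $\bE(\bv)^\dagger\bE(\bc)^\dagger\bE(\bv) = (-1)^{\inners{\bc}{\bv}}\bE(\bc)^\dagger$, and the last step uses $\bE(\bv)\bG = \bG\bE(\bv)$ followed by cyclicity to cancel $\bE(\bv)^\dagger\bE(\bv) = \bI_N$. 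Hence $\alpha_\bc = (-1)^{\inners{\bc}{\bv}}\alpha_\bc$.

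\medskip

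Finally, since $\bE(\bc) \in \supp(\bG)$ means precisely $\alpha_\bc \neq 0$, the identity $\alpha_\bc = (-1)^{\inners{\bc}{\bv}}\alpha_\bc$ forces $(-1)^{\inners{\bc}{\bv}} = 1$, i.e., $\inners{\bc}{\bv} = 0$. As $\bv \in C_\bG$ was arbitrary, this gives $\bc \in C_\bG^\sperp$, and therefore $\supp(\bG) \subseteq C_\bG^\sperp$. I do not expect a serious obstacle here; the only point requiring a little care is bookkeeping the phase in $\bE(\bv)^\dagger\bE(\bc)^\dagger\bE(\bv)$ — one should work with the Hermitian $\bE$'s (or pass to $\bD$'s, where the commutation sign from~\eqref{e-Dab} is cleanest) so that the factors of $i$ do not clutter the computation. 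One could equivalently phrase the argument via the invariance of $\supp(\bG)$ under $\bF_\bG$ restricted to $\Fix(\bF_\bG)$ and a dimension count against $\Res(\bF_\bG)$, but the direct trace computation above is shorter and self-contained.
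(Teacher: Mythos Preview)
Your proof is correct and takes essentially the same approach as the paper: both arguments insert $\bE(\bv)^\dagger\bE(\bv)=\bI_N$ into the trace defining $\alpha_\bc$, commute $\bE(\bv)$ past $\bG$, and pick up the sign $(-1)^{\inners{\bc}{\bv}}$ from the (anti)commutation of $\bE(\bv)$ with $\bE(\bc)$. The only cosmetic difference is that the paper phrases it as a contrapositive (choose $\bv\in C_\bG$ with $\inners{\bc}{\bv}=1$ and conclude $\Tr(\bG\bE(\bc))=0$), whereas you take an arbitrary $\bv\in C_\bG$ and deduce $\inners{\bc}{\bv}=0$ from $\alpha_\bc\neq 0$; the underlying computation is identical.
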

\begin{proof}
We will show the reverse inclusion of the complements. Indeed, let $\bc \in \Fm$ be such that $\bc \notin C_\bG^\sperp$. Then, there exists $\bv \in C_\bG$ such that $\inners{\bc}{\bv} = 1$. It follows that $\bE(\bv)$ commutes with $\bG$ and anticommutes with $\bE(\bc)$. Thus
\begin{align}
    \Tr\big(\bG\bE(\bc)\big) & = \Tr\Big(\bE(\bv)^\dagger\bE(\bv)\bG\bE(\bc)\Big) \\ 
    & = \Tr\Big(\bE(\bv)^\dagger\bG\bE(\bv)\bE(\bc)\Big) \\
    & = -\Tr\Big(\bE(\bv)^\dagger\bG\bE(\bc)\bE(\bv)\Big) \\
    & = -\Tr\big(\bG\bE(\bc)\big),
\end{align}
which in turn implies $\Tr(\bG\bE(\bc)) = 0$, and hence $\bc\notin \supp(\bG)$.
\end{proof}

%We proceed with computing the support of Clifford matrices specified in~\eqref{e-GDP}. First we have the following.

%%%%%%%%%%%%%%%%%%%%%%%%%%%%%%%%%%%%%%%%%%%%%%%%%%%%%%%%%%%%%
Next, we completely characterize the supports of standard Clifford matrices~\eqref{e-GDP} in terms of the invariants~\eqref{e-fix}-\eqref{e-res} of the defining symplectic matrices.
\begin{prop}\label{P-lem}
The support of standard Clifford matrices introduced in~\eqref{e-GDP} satisfies the following:
\begin{arabiclist} 
\item $\supp(\bG_D(\bP)) = \Res(\bP^{-1})\times \Fix(\bP)^\perp = \Res(\bP^{-1})\times \Res(\bP)$.
\item Let $\bS \in \Sym(m)$ and $W = \ker(\bS) = \{\bw\in \F_2^m\mid \bw\bS = {\bf 0}\}$. If $\Tr(\bG_U(\bS)) \neq 0$ then $\supp(\bG_U(\bS)) = \{{\bf 0}\}\times W^\perp$. 
%\nr{Should we emphasize that $W^\perp$ is the column space of $S$?} 
Otherwise $\bG_U(\bS)$ is supported on a coset of $\{{\bf 0}\}\times W^\perp$. As a consequence, the support of diagonal Cliffords is completely characterized by the row/column space of the associated symmetric $\bS$.
\item Let $D_r=\{(\bx,{\bf 0}_{m-r},\bx,{\bf 0}_{m-r})\mid \bx \in \F_2^r\}\subset \Fm$. Then $\supp(\bG_\Omega(r)) = ({\bf 1}_r,{\bf 0}_{2m-r})\oplus D_r$, where ${\bf 1}_r$ denotes the all ones vector of size $r$.
As a consequence, partial Hadamard matrices $\bG_\Omega(r)$ are supported on a coset of $\Res(\bF_\Omega(r))$.
\end{arabiclist}
\end{prop}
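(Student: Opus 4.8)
The plan is to read off each support directly from the Hilbert--Schmidt pairing: by the remark after~\eqref{e-supp}, $\bE(\ba,\bb)\in\supp(\bG)$ iff $\Tr(\bG\,\bD(\ba,\bb))\neq 0$, and for each of the three gates in~\eqref{e-GDP} this trace collapses to an elementary exponential sum over $\F_2^m$. For part~(1), $\bG_D(\bP)$ is the permutation $|\bv\r\mapsto|\bv\bP\r$, so the $(\bv,\bv)$ entry of $\bG_D(\bP)\bD(\ba,\bb)$ is $(-1)^{\bb\bv\T}$ when $(\bv\oplus\ba)\bP=\bv$, i.e. when $\ba=\bv(\bI\oplus\bP^{-1})$, and $0$ otherwise. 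Hence $\Tr(\bG_D(\bP)\bD(\ba,\bb))=0$ unless $\ba\in\rs(\bI\oplus\bP^{-1})=\Res(\bP^{-1})$; when it lies there, the set of such $\bv$ is a coset $\bv_0+\Fix(\bP)$, so the trace equals $\pm\sum_{\bw\in\Fix(\bP)}(-1)^{\bb\bw\T}$, nonzero exactly when $\bb\in\Fix(\bP)^\perp$. The kernel/row-space duality $\Fix(\bP)^\perp=\rs(\bI\oplus\bP)=\Res(\bP)$ (together with $\Res(\bP^{-1})=\Res(\bP)$, since $\Res(\bP)$ is $\bP$-invariant) yields~(1); this part is routine.

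For part~(2), $\bG_U(\bS)$ is diagonal, so $\bG_U(\bS)\bD(\ba,\bb)$ has zero diagonal unless $\ba={\bf 0}$; thus $\supp(\bG_U(\bS))\subseteq\{{\bf 0}\}\times\F_2^m$ and everything hinges on the Gauss-type sum $g(\bb):=\sum_{\bv\in\F_2^m}i^{\bv\bS\bv\T}(-1)^{\bb\bv\T}$. I regard $Q(\bv):=\bv\bS\bv\T\bmod 4$ as a $\Z_4$-valued quadratic form; its associated $\F_2$-bilinear form is $\bv\bS\bv'\T$, with radical $W=\ker\bS$. Fixing a complement $W'$ of $W$ and writing $\bv=\bw+\bu$ one checks $Q(\bw+\bu)=Q(\bu)+2\ell(\bw)$, where $Q|_W=2\ell$ for an $\F_2$-linear $\ell:W\to\F_2$ (so $Q|_W$ is linear, not quadratic --- this is where the diagonal of $\bS$ enters and must be tracked). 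Summing over $\bw$ first forces $g(\bb)=0$ unless $\bw\mapsto\bb\bw\T$ agrees with $\ell$ on $W$, which cuts out a coset of $W^\perp=\rs(\bS)$ (equal to $W^\perp$ itself iff $\ell\equiv 0$); and for $\bb$ in this coset, $\widetilde Q(\bv):=Q(\bv)+2\bb\bv\T$ has the same radical $W$ and vanishes on $W$, hence descends to a nondegenerate $\Z_4$-quadratic form on $\F_2^m/W$, whose Gauss sum is nonzero by the classical fact that a nondegenerate $\Z_4$-quadratic form in dimension $d$ satisfies $|\sum_\bv i^{\widetilde Q(\bv)}|=2^{d/2}$. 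Taking $\bb={\bf 0}$ in the same analysis shows $\Tr(\bG_U(\bS))\neq 0\iff\ell\equiv 0$, which is exactly the stated dichotomy. Handling the linear correction $\ell$ and invoking the non-vanishing of nondegenerate $\Z_4$-Gauss sums is the crux of the whole proposition.

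For part~(3), expand $\bH_2=\tfrac1{\sqrt2}\bigl(\bD(1,0)+\bD(0,1)\bigr)$. Since tensor products of Paulis carry no extra phase (unlike products, cf.~\eqref{e-Dab}), $\bH_2^{\otimes r}=2^{-r/2}\sum_{\bb\in\F_2^r}\bD({\bf 1}_r\oplus\bb,\bb)$, with all $2^r$ coefficients nonzero and the $2^r$ Paulis pairwise distinct; tensoring with $\bI_{2^{m-r}}=\bD({\bf 0}_{m-r},{\bf 0}_{m-r})$ merely pads with zeros. Collecting exponents, $\supp(\bG_\Omega(r))=\{({\bf 1}_r\oplus\bb,{\bf 0}_{m-r},\bb,{\bf 0}_{m-r}):\bb\in\F_2^r\}=({\bf 1}_r,{\bf 0}_{2m-r})\oplus D_r$, and a one-line computation of $\bI_{2m}\oplus\bF_\Omega(r)=\fourmat{\Imr}{\Imr}{\Imr}{\Imr}$ identifies $D_r=\rs(\bI_{2m}\oplus\bF_\Omega(r))=\Res(\bF_\Omega(r))$, giving the claimed coset. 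This part is a direct expansion; together with~(1) it is bookkeeping, and the real obstacle is the quadratic-form analysis in~(2).
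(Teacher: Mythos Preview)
Your proof is correct. Part~(1) follows the same trace computation as the paper (the paper phrases it via the sets $\Fix_\ba(\bP)$, but it is the same calculation; your parenthetical $\Res(\bP^{-1})=\Res(\bP)$ is true but not needed, since the proposition's second equality only asserts $\Fix(\bP)^\perp=\Res(\bP)$).

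Parts~(2) and~(3) take genuinely different routes from the paper. For~(2), the paper does not split along a complement $W'$; instead it invokes an external lemma \cite[Appendix~A]{CHJ10} asserting that the full sum $\sum_{\bv\in\F_2^m}i^{\bv\bS\bv\T+2\bv\bb\T}$ is nonzero iff the restricted sum $\sum_{\bw\in W}i^{\bw\bS\bw\T+2\bw\bb\T}$ is, and then observes (as you do) that the exponent is $\F_2$-linear on $W$, so the restricted sum is a character sum on $W$. Your factorization $g(\bb)=\bigl(\sum_{\bw\in W}(-1)^{\ell(\bw)+\bb\bw\T}\bigr)\bigl(\sum_{\bu\in W'}i^{\widetilde Q(\bu)}\bigr)$ is a self-contained replacement for that citation: it trades the external lemma for the classical non-vanishing of nondegenerate $\Z_4$-Gauss sums, and makes the coset representative (your $\ell$, the paper's $\bc$) appear in exactly the same way. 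For~(3), the paper again computes $\Tr(\bG_\Omega(r)\bD(\ba,\bb))$ directly and reduces to the character condition $\ba_{1:r}\oplus\bb_{1:r}={\bf 1}_r$, $\ba_{r+1:m}=\bb_{r+1:m}={\bf 0}$; your tensor expansion of $\bH_2=\tfrac1{\sqrt 2}(\bD(1,0)+\bD(0,1))$ bypasses the trace entirely and reads off the support immediately. Both alternatives are shorter and more transparent than the paper's arguments, at the modest cost of importing the Gauss-sum fact in~(2).
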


\begin{proof}
(1) By definition we have $\bG_D(\bP) = \sum_{\bv\in\F_2^m}|\bv\bP\r\l\bv|$. Thus, for $\bG = \bG_D(\bP)$ we have
\begin{align}
    \bG\bD(\ba,\bb) = & \sum_{\bv\in\F_2^m}|\bv\bP\r\l\bv|\sum_{\bw\in\F_2^m}(-1)^{\bw\bb\T}|\bw\oplus\ba\r\l\bw|\\
    = & \sum_{\bv\in\F_2^m}(-1)^{\bv\bb\T}|(\bv\oplus\ba)\bP\r\l\bv|.
\end{align}
For $\ba \in \F_2^m$ we will denote $\Fix_\ba(\bP) = \{\bv\in\F_2^m \mid \bv\oplus\bv\bP = \ba\bP\}$. With this notation we have 
\begin{equation}
    \Tr\big(\bG\bD(\ba,\bb)\big) = \sum_{\bv\in \Fix_\ba(\bP)}(-1)^{\bv\bb\T}.
\end{equation}
Since $\Fix(\bP)$ is a subspace of $\F_2^m$ we have that $\bD(\textbf{0},\bb)\in \supp(\bG)$ iff $\bb \in \Fix(\bP)^\perp$. On the other hand, for $\bx \in \Fix_\ba(\bP)$ we have $\bx \oplus \Fix(\bP)=\Fix_\ba(\bP)$. Indeed, the forward containment is trivial and equality follows due to equal cardinalities. Thus, if $\Fix_\ba(\bP)\neq \emptyset$, we have that $\bD(\ba,\bb)\in \supp(\bG)$ iff $\bb\in \Fix(\bP)^\perp$. Next, recall the subspace $\Res(\bP)$ from~\eqref{e-res}. We have that $\Fix_\ba(\bP)\neq \emptyset$ iff $\ba \in \Res(\bP^{-1})$. We conclude that $\bG_D(\bP)$ is supported on $\Res(\bP^{-1})\times \Fix(\bP)^\perp \subset \Fm$. 
Then by definition $\Fix(\bP)^\perp = \Res(\bP)$.

(2) Let $\bG:=\bG_U(\bS)$. Then
\begin{equation}\label{e-TrFUS}    
\Tr\big(\bG\bD(\ba,\bb)\big) = \sum_{\bv\in\F_2^m}i^{(\bv\oplus\ba)\bS(\bv\oplus\ba)\T + 2\bv\bb\T}\l\bv|\bv\oplus\ba\r.
\end{equation}
It follows that $\bD(\ba,\bb)\in \supp(\bG)$ only if $\ba = {\bf 0}$. So from now on we fix $\ba={\bf 0}$. It is shown in~\cite[Appendix~A]{CHJ10} that the sum in~\eqref{e-TrFUS} is nonzero iff 
\begin{equation}\label{e-TrFUS1}
    \sum_{\bw\in W}i^{\bw\bS\bw\T + 2\bw\bb\T} \neq 0.
\end{equation}
Consider the maps $\chi_\bS:\bw\longmapsto \bw\bS\bw\T$ and $\chi_\bb:\bw\longmapsto + 2\bw\bb\T$, and put $\chi_{\bS,\bb} := \chi_\bS + \chi_\bb$. For $\bv,\bw\in W$ we have
\begin{align}
    \chi_{\bS,\bb}(\bv\oplus\bw) & = (\bv \oplus \bw) \bS (\bv \oplus \bw)\T + 2 (\bv \oplus \bw) \bb\T \\
    & = (\bv + \bw) \bS (\bv + \bw)\T + 2 (\bv + \bw) \bb\T \bmod 4\\
    & = \chi_{\bS,\bb}(\bv) + \chi_{\bS,\bb}(\bw) + 2\bw\bS\bv\T \\
    & = \chi_{\bS,\bb}(\bv) + \chi_{\bS,\bb}(\bw) \mod 4,
\end{align}
where the last equality follows by the fact that $\bw\bS = {\bf 0}\bmod 2$. Thus the map $\bw \longmapsto i^{\chi_{\bS,\bb}(\bw)}$ is a character of $W$. It follows that $\bD({\bf 0},\bb)\in \supp(\bG)$ iff $\chi_{\bS,\bb}(\bw) = 0$ for all $\bw \in W$.

By the above argument, it also follows that $\Tr(\bG)\neq 0$ iff $\chi_\bS(\bw) = 0$ for all $\bw \in W$. 
In this case,~\eqref{e-TrFUS1} reduces to $\sum_{\bw\in W}(-1)^{\bw\bb\T} \neq 0$, which holds iff $\bb\in W^\perp$. 
Similarly, $\Tr(\bG) = 0$ iff $\chi_\bS$ is not the trivial map. 
Since $\chi_\bS$ is an even-valued linear map (mod 4) then there exists $\bc$ (depending on $\bS$) such that $\chi_\bS(\bw) = 2 \bw \bc\T$. But then it is clear that $ \chi_{\bS,\bb}$ is the zero map iff $\bc\oplus\bb \in W^\perp$ iff $\bb \in \bc\oplus W^\perp$.
%\nr{Yes, now it is clear to me. Here is a suggestion just to make it a little easier for the reader. Could you say explicitly that $\chi_\bS(\bw)$ is an even-valued linear map (mod 4) and hence can be expressed as $\chi_\bS(\bw) = 2 \bw \bc_\bS^T$ for some $\bc_\bS$ that depends on $\bS$? Then, it is clear that the system $\bW \bb^T = \bW \bc^T$ has a solution since the right hand side is indeed in the column space of $\bW$ (whose rows are $\bW \in W$). Finally, the count $2^r$ is also very obvious from this.} 
It follows that $\supp(\bG) = \{{\bf 0}\}\times (\bc\oplus W^\perp )$ for any such $\bc$ as above.

(3) Consider $\bG = \bG_\Omega(r)$ for $0\leq r \leq m$, and let us first handle $r = m$, which corresponds to the fully occupied Hadamard matrix in $N = 2^m$ dimensions. In this case, we have
\begin{align}
     & \bG\bD(\ba,\bb) \nonumber\\ & = \frac{1}{\sqrt{N}}\sum_{\bv,\bw}(-1)^{\bv\bw\T}|\bv\r\l\bw|\sum_{\bx\in\F_2^m}(-1)^{\bx\bb\T}|\bx\oplus\ba\r\l\bx| \\
    & = \frac{1}{\sqrt{N}}\sum_{\bv,\bx\in\F_2^m}(-1)^{\bv(\bx\oplus\ba)\T + \bx\bb\T}|\bv\r\l\bx|.
\end{align}
The above yields
\begin{equation}
    \Tr\big(\bG\bD(\ba,\bb)\big) = \frac{1}{\sqrt{N}}\sum_{\bv\in \F_2^m}(-1)^{\bv(\bv\oplus\ba\oplus\bb)\T}.
\end{equation}
Then map $\bv\longmapsto \bv(\bv\oplus\ba\oplus\bb)\T$ is additive, and it is the trivial map iff $\ba\oplus\bb = {\bf 1}_m$. Thus $\bD(\ba,\bb)\in \supp(\bG)$ iff $\ba+\bb = {\bf 1}_m$. 
It follows that $\supp(\bG) = ({\bf 1}_m,{\bf 0}_m)\oplus D_m$. Then, for $r < m$, a similar argument implies that  $\bD(\ba,\bb)\in \supp(\bG)$ iff $\ba_{1:r} + \bb_{1:r} = {\bf 1}_r$ and $\ba_{r+1:m} = \bb_{r+1:m} = {\bf 0}_{m-r}$. The proof is concluded with the observation that $D_r = \Res(\bF_\Omega(r))$.
\end{proof}

\begin{rem}
Let $\bP\in \GL(m)$ and $\bS \in \Sym(m)$, and put $\bG = \bG_D(\bP)\bG_U(\bS)$. Then
$\bG = \sum_{\bv\in \F_2^m} i^{\bv\bS\bv\T}|\bv\bP\r\l\bv|$, which in turn yields,
\begin{align}
    \Tr\big(\bG\bD(\ba,\bb)\big) & = \Tr\left(\sum_{\bv\in \F_2^m} i^{\bv\bS\bv\T + 2\bv\bb\T}|(\bv\oplus\ba)\bP\r\l\bv|\right)\\
    & = \sum_{\bv\in \Fix_\ba(\bP)}i^{\bv\bS\bv\T + 2\bv\bb\T}.
\end{align}
Now the analysis continues as in Proposition~\ref{P-lem}(2); see also~\cite[Lem.~6]{CHJK10} for further details.
\end{rem}

%%%%%%%%%%%%%%%%%%%%%%%%%%%%%%%%%%%%%%%%%%%%%%%%%%%%

%\begin{rem}
%From the analysis above it follows that when $\bS$ is invertible we have $W = \{\bf 0\}$, and $\supp(\bG_U(\bS)) = \{\bf 0\}\times \F_2^m$. \tp{say a few more things about Kerdock, DG, MUBs, etc...?}
%\end{rem}

%%%%%%%%%%%%%%%%%%%%%%%%%%%%%%%%%%%%%%%%%%%%%%%%%%%%

\begin{exa}\label{ex-CNOT}
We saw from Proposition~\ref{P-lem} that matrices $\bG_D(\bP)$ are supported on a subspace/subgroup. This is of course consistent with Corollary~\ref{cor} since these matrices always have trace. Indeed, entry (1,1) is always 1 since ${\bf 0}\bP = {\bf 0}$. The CNOT gate is of form $\bG_D(\bP)$ where 
\begin{equation}
    \bP = \bP^{-1} = \fourmat{1}{1}{0}{1}.
\end{equation}
We have $\Res(\bP^{-1}) = \{00,01\}$ and $\Fix(\bP)^\perp = \{00,10\}$. Thus, $\supp({\rm CNOT}) = \{0000,0010,0100,0110\}$, as one can directly verify. Here we have $m = 2$ and CNOT corresponds to the symplectic $\bF_{{\rm CNOT}} = \bF_D(\bP)$, which is a hyperbolic involution with $\Fix(\bF_{\rm CNOT}) = \supp({\rm CNOT})$. Note also that $\Fix(\bF_{\rm CNOT}) = C_{\rm CNOT}$, and thus equality in Proposition~\ref{P-supp} can be achieved.  In addition $\bF_{{\rm CNOT}} = \bT_{0010}\bT_{0100}\bT_{0110}$. On the complex domain we have
\begin{align}\label{e-CNOT}
    \text{CNOT} & = \frac{1 - i}{\sqrt{2}}\cdot\frac{(\bI + i\bE_1)(\bI + i\bE_2)(\bI -i\bE_1\bE_2)}{\sqrt{8}}\\
    & = \frac{1}{2}(\bI+\bE_1+\bE_2 - \bE_1\bE_2),
    \end{align}
where $\bE_1 = \bE(00,10), \bE_2 = \bE(01,00)$.
It follows that CNOT is supported on the MCS generated by $\bE_1$ and $\bE_2$.
\end{exa}

%%%%%%%%%%%%%%%%%%%%%%%%%%%%%%%%%%%%%%%%%%%%%%%%

\begin{exa}
\begin{arabiclist}
\item Let $\bb\in \F_2^m$ and consider the symmetric matrix $\bS_\bb := \bb\T\bb$. In this case $[\ker(\bS_\bb)]^\perp = \{{\bf 0},\bb\}$. In addition $\bG_U(\bS_\bb) = (\bI_N \pm i\bE({\bf 0},\bb))/\sqrt{2}$.
%\nr{Use $(\bI_N \pm i\bE({\bf 0},\bb))$ instead?}
\item Let $\bS$ be a diagonal matrix with diagonal $d_\bS$. Let $r = \hwt(d_\bS)$ be the number of non-zero elements in $d_\bS$. In this case $\bF_U(\bS)$ is a product of transvections $\bT_{\bv_n}$ where $\bv_n = ({\bf 0},\bb_n)$ where $\bb_n$ is the $n$th nonzero row of $\bS$. Then 
\begin{equation}
    \bG_U(\bS) = \prod_{n = 1}^r\frac{\bI_N + i\bE(\bv_n)}{\sqrt{2}},
\end{equation}
from which we may also conclude that $\Tr(\bG_U(\bS)) \neq 0$.
%\nr{Do you want to take the index $n$ to be from the support of $d_\bS$? Also, I think you want $\bv_n$ and not $\bv_i$.}
\end{arabiclist}
\end{exa}

%%%%%%%%%%%%%%%%%%%%%%%%%%%%%%%%%%%%%%%%%%%%%%%%%

We end this section by computing the supports of the \emph{local} Clifford group $(\cl_2)^{\otimes m}\subset \cl_N$. 

\begin{prop}[\mbox{Support of local Cliffords}]\label{P-lc}
Let $\bG = \bG_1\otimes\cdots \otimes \bG_m\in (\cl_2)^{\otimes m}$, and let $S_i$ be the support of $\bG_i$ in $\F_2^2$. Then $\supp(\bG) = \sigma(S_1\times\cdots\times S_m)$, where $\sigma$ is the permutation $(a_1,b_1,\ldots,a_m,b_m)\longmapsto (a_1,\ldots,a_m,b_1,\ldots,b_m)$.
\end{prop}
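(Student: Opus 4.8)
The plan is to reduce the multi-qubit statement to the single-qubit case by exploiting the tensor structure of both the Pauli basis and the trace. First I would note that for a tensor product $\bM = \bM_1\otimes\cdots\otimes\bM_m$ of $2\times 2$ matrices, and for a tensor product of single-qubit Pauli operators $\bD(a_1,b_1)\otimes\cdots\otimes\bD(a_m,b_m)$, the Hilbert--Schmidt inner product factors: $\Tr\!\big(\bM^\dagger (\bD(a_1,b_1)\otimes\cdots\otimes\bD(a_m,b_m))\big) = \prod_{i=1}^m \Tr\!\big(\bM_i^\dagger\bD(a_i,b_i)\big)$. Hence this coefficient is nonzero if and only if each factor is nonzero, i.e., if and only if $(a_i,b_i)\in S_i$ for every $i$. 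So the only subtlety is bookkeeping: the single-qubit Pauli basis on the $i$-th tensor factor is indexed by $(a_i,b_i)\in\F_2^2$, so the support of $\bG$ as a subset of $(\F_2^2)^m$ is exactly $S_1\times\cdots\times S_m$; but our global convention~\eqref{e-supp} indexes Paulis on $m$ qubits by vectors $(\ba,\bb)\in\Fm$ in the ``$\ba$-block then $\bb$-block'' ordering $(a_1,\ldots,a_m,b_1,\ldots,b_m)$, whereas the tensor-factor ordering naturally produces $(a_1,b_1,\ldots,a_m,b_m)$. The map $\sigma$ is precisely the coordinate permutation relating these two orderings, so $\supp(\bG)=\sigma(S_1\times\cdots\times S_m)$.

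The key steps, in order, are: (i) recall from Remark after~\eqref{e-supp} that we may compute supports using $\bD(\bc)$ rather than $\bE(\bc)$, which avoids tracking the $i^k$ phases; (ii) verify that the $m$-qubit operator $\bD(\ba,\bb)$ with $\ba=(a_1,\ldots,a_m)$, $\bb=(b_1,\ldots,b_m)$ equals, up to the reindexing $\sigma$, the tensor product $\bigotimes_{i=1}^m \bD(a_i,b_i)$ of single-qubit operators --- this is immediate from the definition $\bD(\ba,\bb)\colon|\bv\r\mapsto(-1)^{\bb\bv\T}|\bv\oplus\ba\r$ since both the sign $(-1)^{\bb\bv\T}=\prod_i(-1)^{b_iv_i}$ and the shift $\bv\oplus\ba$ act coordinate-wise; (iii) apply multiplicativity of the trace over tensor products to factor $\Tr(\bG^\dagger\bD(\bc)) = \prod_i \Tr(\bG_i^\dagger\bD(a_i,b_i))$; (iv) conclude that $\bD(\bc)\in\supp(\bG)$ iff every factor is nonzero iff $(a_i,b_i)\in S_i$ for all $i$; (v) translate this membership condition back through $\sigma$ to obtain the stated description of $\supp(\bG)$ as a subset of $\Fm$.

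I do not expect a serious obstacle here: the result is essentially a formal consequence of multiplicativity of trace under tensor products together with the tensor-compatibility of the Heisenberg--Weyl basis. The one place requiring care --- and the only ``main'' point worth stating explicitly --- is the index-permutation $\sigma$: one must be careful that the global symplectic/Pauli labelling convention used throughout the paper (all $\ba$-coordinates, then all $\bb$-coordinates) differs from the interleaved labelling one gets naturally from the tensor decomposition, and that $\sigma$ as defined is exactly the right relabelling. Everything else is routine, and no assumption beyond the definitions of $\bD(\cdot)$, $\supp(\cdot)$, and the orthonormality of $\cE_N$ is needed.
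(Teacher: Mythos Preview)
Your proposal is correct and takes essentially the same approach as the paper: the paper's proof is the single sentence ``The result follows immediately by the fact that the trace function is multiplicative on pure tensors,'' and your write-up simply unpacks this, together with the bookkeeping for the index permutation $\sigma$, in more detail than the paper bothers to record.
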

\begin{proof}
The result follows immediately by the fact that the trace function is multiplicative on pure tensors.
\end{proof}

%%%%%%%%%%%%%%%%%%%%%%%%%%%%%%%%%%%%%%%%%%%%%%%%%%%%%%%

\section{On Hermitian Clifford matrices}

Hermitian Clifford matrices, on top of being interesting on their own right, they also play a prominent role on understanding the third level of the Clifford hierarchy $\cC^{(3)}$. Indeed, by definition, $\bC\cH\cW_N\bC^\dagger\subset \cl_N$ for all $\bC\in \cC^{(3)}$. 
The conjugate action preserves traces and the Hermitian property. Thus, other than the identity, only traceless Clifford matrices can emerge from conjugate action with a third level matrix. 
With the same notation as in Proposition~\ref{P-GT}, we see that $\bG$ is traceless iff $\bE_0\notin S$. 
Further, $\bC\bE\bC^\dagger$ must also be a Hermitian Clifford matrix for any Hermitian Pauli matrix $\bE$. The corresponding symplectic matrix $\Phi(\bC\bE\bC^\dagger)$ is an involution, and symplectic matrices emerging in this way ($\bC$ fixed, $\bE$ varies) must commute. This fact, although elementary, is crucial because any group of commuting involutions is conjugate\footnote{Two groups $S$ and $S'$ are called conjugate if there exists $g$ such that $S' = gSg^\dagger$.}  with some subgroup of $S_U$ from~\eqref{e-S_U}. 
This means that, for instance, the CNOT gate (or any Clifford matrix of form $\bG_D(\bP)$) cannot emerge from a third level action, despite $\bF_{\rm CNOT}$ being a symplectic involution; see also Example~\ref{ex-CNOT}. 

We have mentioned that any Clifford matrix, is, up to a multiplication by a Pauli, a product of transvections. We have the following structural results if the transvections involved are independent.

%We have seen that that Clifford matrices are products of transvections, and can be written as 
%\begin{equation}\label{e-eq}
%    \bG = \bE_0\prod_{n=1}^k\frac{1}{\sqrt{2}}(\bI+i\bE_n),
%\end{equation}
%where $\bE_0,\bE_n$ are Hermitian. Assume that all the $\bE_n$ in~\eqref{e-eq} are independent and write $\bE_n = \bE(\bc_n)$. Let $\bC = (\bA \,\, \bB)$ be the $k\times 2m$ binary matrix whose $n$th row is $\bc_n = (\ba_n,\bb_n)$. %Further, let $\cS = \l\bE_1,\ldots,\bE_k\r \leq \HW$ and put $S := \Psi(\cS)\leq \Fm$. Since $|\cS| = 2^k$. 
%Using~\eqref{e-Eab} we have that
%\begin{equation}\label{e-G}
%\begin{split}
%    \bG  & =  \frac{\bE_0}{\sqrt{2^k}}\sum_{\bd\in\F_2^k}i^{\bd(\bI_k+\widetilde{\bC})\bd\T}\bE(\bd\bC \!\!\!\!\mod 4)\\ & = \frac{1}{\sqrt{2^k}}\sum_{\bd\in\F_2^k}i^{\bd(\bI_k+\widetilde{\bC})\bd\T}\bE_0\bE(\bd\bC \!\!\!\!\mod 4),
%    \end{split}
%\end{equation}
%where $\widetilde{\bC}$ is the $k\times k$ matrix whose $(i,j)$ entry is $\ba_j\bb_i\!\!\T - \ba_i\bb_j\!\!\T\!\!\mod 4$ if $i<j$ and 0 else. If we put $\bE_0 = \bE(\bc_0) = \bE(\ba_0,\bb_0)$, then~\eqref{e-G} can be further rewritten as 
%\begin{equation}\label{e-G1}
%    \bG = \frac{1}{\sqrt{2^k}}\sum_{\bd\in\F_2^k}i^{\bd(\bI_k+\widetilde{\bC})\bd\T}i^{\bd\bA\bb_0\!\!\!\T - \ba_0\bB\T\bd\T}\bE(\bd\bC+\bc_0\!\!\!\!\mod 4)
%\end{equation}
%It follows that $\bG$ is Hermitian iff all coefficients in~\eqref{e-G1} are $\pm 1$.

\begin{theo}\label{T-T}
Let $\bE_n = \bE(\bc_n),  n = 1,\ldots, k$, be a set of $k$ \emph{independent} Hermitian Pauli matrices. Let also $\bE_0 = \bE(\bc_0)$ be a Hermitian Pauli matrix. Then, the Clifford matrix 
\begin{equation}\label{e-eq}
    \bG = \bE_0\prod_{n=1}^k\frac{1}{\sqrt{2}}(\bI+i\bE_n),
\end{equation}
is Hermitian iff $\bE_0$ anticommutes with all $\bE_n$ and all $\bE_n$ commute with each other. As a consequence, if $\bG$ is Hermitian then it is also traceless.
\end{theo}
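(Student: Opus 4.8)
The plan is to compute $\bG^\dagger$ directly from the product formula and match it against $\bG$ under the algebraic relations of the Heisenberg--Weyl group, using Remark~\ref{R-DEab} to track the phases that commutation/anticommutation introduce. First I would note that each factor $\frac{1}{\sqrt 2}(\bI + i\bE_n)$ has adjoint $\frac{1}{\sqrt 2}(\bI - i\bE_n)$ since $\bE_n$ is Hermitian, so that
\begin{equation*}
    \bG^\dagger = \prod_{n=k}^{1}\frac{1}{\sqrt 2}(\bI - i\bE_n)\,\bE_0 ,
\end{equation*}
with the product now in reversed order. The strategy is then to push $\bE_0$ to the left through all the factors: for each $n$, commuting $\bE_0$ past $(\bI - i\bE_n)$ either leaves the factor untouched (if $\bE_0\bE_n = \bE_n\bE_0$) or flips its sign to $(\bI + i\bE_n)$ (if $\bE_0\bE_n = -\bE_n\bE_0$), since $\bE_0(\bI - i\bE_n) = (\bI \mp i\bE_n)\bE_0$ accordingly. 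Likewise, the factors $(\bI - i\bE_n)$ among themselves can be reordered back to increasing order at the cost of no new scalars once we know the $\bE_n$ pairwise commute (they then commute as operators since $\bE_n^2 = \bI$ forces the phase in Remark~\ref{R-DEab} to be $\pm 1$, and a genuine anticommutation would make the two-element sub-``stabilizer'' inconsistent — here I would instead argue more cleanly that if some $\bE_i,\bE_j$ anticommute then $\bG$ cannot equal $\bG^\dagger$, handled symmetrically to the $\bE_0$ analysis).

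Concretely, for the ``if'' direction: assuming all $\bE_n$ mutually commute and $\bE_0$ anticommutes with each, pushing $\bE_0$ left through factor $n$ converts $(\bI - i\bE_n)$ to $(\bI + i\bE_n)$, and since the $\bE_n$ commute the reversed-order product $\prod_{n=k}^1(\bI + i\bE_n)$ equals $\prod_{n=1}^k(\bI + i\bE_n)$; hence $\bG^\dagger = \bE_0\prod_{n=1}^k\frac{1}{\sqrt 2}(\bI + i\bE_n) = \bG$, as desired. For the ``only if'' direction I would show the contrapositive in two cases: if some $\bE_i,\bE_j$ anticommute, or if $\bE_0$ commutes with some $\bE_i$, then expanding $\bG$ and $\bG^\dagger$ in the Pauli basis (using that the $\bE_n$ are \emph{independent}, so the $2^k$ products $\prod_{n\in T}\bE_n$ over $T\subseteq\{1,\dots,k\}$ have distinct — up to phase — labels $\sum_{n\in T}\bc_n$, all shifted by $\bc_0$) yields a Pauli coefficient that differs between $\bG$ and $\bG^\dagger$. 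Independence is exactly what guarantees no cancellation or collision among these $2^k$ terms, so a single mismatched sign on one term is a genuine obstruction. The final clause — Hermitian implies traceless — then follows immediately from Corollary~\ref{cor} together with the observation that $\bE_0\notin S = \langle\bE_1,\dots,\bE_k\rangle$: indeed if $\bE_0$ anticommutes with every generator $\bE_n$ but $\bE_0$ were in $S$, it would have to commute with itself and hence (being a product of the $\bE_n$) its commutation pattern with the $\bE_n$ is forced to be trivial, contradicting $k\ge 1$; so $\bG$ is supported on the coset $\bE_0 S$, which contains no scalar multiple of $\bI_N$, whence $\Tr(\bG) = 0$.

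The main obstacle I anticipate is the bookkeeping of phases in the ``only if'' direction: one must be careful that binary versus mod-$4$ arithmetic in the $\bE(\cdot)$ labels (cf.\ \eqref{e-Eab}--\eqref{eq:Eab_identity}) does not secretly reconcile a sign discrepancy through an $i$-phase, and that the $\pm$ ambiguities in \eqref{e-transvection} are pinned down consistently across all factors. The cleanest route is probably to work with the $\bE_n$ fixed as honest Hermitian matrices (not merely up to phase), expand $\bG = 2^{-k/2}\sum_{T}\big(\prod_{n\in T} i\bE_n\big)\bE_0$ with a definite ordering convention, and compare the coefficient of the identity-labeled term (or of $\bE_0$) on both sides; independence ensures this comparison is unambiguous. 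I expect the rest to be routine manipulation within $\cH\cW_N$.
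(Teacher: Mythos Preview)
Your proposal is correct and closely parallels the paper's argument, with one difference in presentation worth noting. For the ``if'' direction you and the paper agree: the paper simply says it ``follows immediately from~\eqref{e-eq},'' and your push-$\bE_0$-through computation is exactly the intended verification. For the ``only if'' direction, the paper takes a slightly more systematic route: rather than comparing $\bG$ against $\bG^\dagger$ term by term, it first derives a closed-form expression~\eqref{e-G}--\eqref{e-G1} for \emph{all} Pauli coefficients of $\bG$ simultaneously, showing they are of the form $i^{Q(\bd)}$ for an explicit quadratic form $Q$ over $\Z_4$. Hermiticity is then equivalent to every coefficient being real, and the paper reads off the anticommutation of $\bE_0$ with each $\bE_n$ from the weight-one terms $\bd = \be_n$, then the pairwise commutation of the $\bE_n$ from the weight-two terms---the same low-weight witnesses your plan would isolate. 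The payoff of the paper's route is that the quadratic-form coefficient formula is itself a result (flagged in the Remark following the theorem as a generalization of the diagonal case from~\cite{CCKS97}); your more direct argument is equally valid for the theorem as stated but does not produce that formula as a byproduct. Your tracelessness argument via Corollary~\ref{cor} and $\bE_0\notin S$ matches the paper's exactly. One small slip: in your final expansion you place $\bE_0$ on the right of the product, but in~\eqref{e-eq} it is on the left; this does not affect the argument.
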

\begin{proof}
Let $\bC := (\bA \,\, \bB)$ be the $k\times 2m$ binary matrix whose $n$th row is $\bc_n = (\ba_n,\bb_n)$. Since all the $\bE_n$ are independent we have that $\rk(\bC) = k$.
%Further, let $\cS = \l\bE_1,\ldots,\bE_k\r \leq \HW$ and put $S := \Psi(\cS)\leq \Fm$. Since $|\cS| = 2^k$. 
Using~\eqref{e-Eab} we have that
\begin{equation}\label{e-G}
\begin{split}
    \bG  & =  \frac{\bE_0}{\sqrt{2^k}}\sum_{\bd\in\F_2^k}i^{\bd(\bI_k+\widetilde{\bC})\bd\T}\bE(\bd\bC \!\!\!\!\mod 4)\\ & = \frac{1}{\sqrt{2^k}}\sum_{\bd\in\F_2^k}i^{\bd(\bI_k+\widetilde{\bC})\bd\T}\bE_0\bE(\bd\bC \!\!\!\!\mod 4),
    \end{split}
\end{equation}
where $\widetilde{\bC}$ is the $k\times k$ matrix whose $(i,j)$ entry is $\ba_j\bb_i\!\!\T - \ba_i\bb_j\!\!\T\!\!\mod 4$ if $i<j$ and 0 else. 
If we write $\bc_0 = (\ba_0,\bb_0)$, then~\eqref{e-G} can be further rewritten as 
\begin{equation}\label{e-G1}
    \bG = \frac{1}{\sqrt{2^k}}\sum_{\bd\in\F_2^k}i^{\bd(\bI_k+\widetilde{\bC})\bd\T}i^{\bd\bA\bb_0\!\!\!\T - \ba_0\bB\T\bd\T}\bE(\bd\bC+\bc_0)
\end{equation}
It follows immediately that $\bG$ is Hermitian iff all coefficients in~\eqref{e-G1} are $\pm 1$.

By looking at the standard basis of $\F_2^k$, i.e., $\bd = \be_n$, and corresponding coefficients, we see that $\ba_n\bb_0\!\!\!\T \oplus\ba_0\bb_n\!\!\!\T = 1$
for $n = 1,\ldots,k$, which in turn means that $\bE_0$ anticommutes with $\bE_n$. 
To show that $\bE_j,\bE_n$ commute, consider $\bd$ of weight two with ones in positions $j,n$. 
This corresponds to looking at the coefficient of $\bE_0\bE_j\bE_n$. 
Because $\bE_0$ anticommutes with both $\bE_j,\bE_n$, the term $i^{\bd\bA\bb_0\!\!\!\T - \ba_0\bB\T\bd\T}$ contributes $\pm 1$ and the term $i^{\bd\bI_k\bd\T} = i^{\rm wt(\bd)}$ contributes $-1$. Thus, the overall coefficient will be $\pm 1$ only if $\bE_j,\bE_n$ commute.

For the converse, one could argue similarly to show that the coefficients in~\eqref{e-G1} are $\pm 1$. However, we point out here that the statement follows immediately from~\eqref{e-eq}.

Finally, for Hermitian $\bG$ as in~\eqref{e-eq} we argued that $\bE_0$ anticommutes with all $\bE_n$. Thus $\bE_0$ cannot be contained on the commutative group generated by all $\bE_n$. As we have mentioned earlier, this implies that $\bG$ is traceless.
\end{proof}

\begin{rem}
Recall Proposition~\ref{P-GT} where a Clifford matrix $\bG$ is written as a generic sum of Hermitian matrices $\bE$. In~\eqref{e-G1} we have explicitly computed the coefficients $\alpha_\bE$, and evidently, they are of form $i^{Q(\bv)}, \bv\in \F_2^m$ where $Q$ is a quadratic form $\mod 4$. This generalizes a result of~\cite{CCKS97} (see also~\cite{CHJ10}) where the authors showed that the coefficients of~\emph{diagonal} Clifford matrices are determined by a quadratic form. Indeed, diagonal Clifford matrices are of form $\bG_U(\bS), \bS\in\Sym(m)$, and we see the aforementioned quadratic form in~\eqref{e-TrFUS}-\!\!~\eqref{e-TrFUS1}.
\end{rem}

%\begin{rem}
%A similar scenario still holds even if $\bE_1,\ldots,\bE_k$ are not necessarily independent. Let $S = \l\bE_1,\ldots,\bE_k\r$, for which $|S| < 2^k$. As in Proposition~\ref{P-GT} we have
%\begin{equation}
%    \bG = \frac{\bE_0}{\sqrt{|S|}}\sum_{\bE\in S}i^{\lambda_\bE}\bE.
%\end{equation}
%From Remark~\ref{R-DEab} we have that when $\bE_0$ multiplies $\bE$ it will introduce an additional $\pm 1$ if they commute or an additional $\pm i$ if they anticommute. If $\bG$ is Hermitian then either $\lambda_\bE = 0,2$ and $\bE_0,\bE$ commute or $\lambda_\bE = 1,3$ and $\bE_0$ and $\bE$ anticommute.
%\end{rem}

\begin{rem}
If $\bG \in \cl_N$ is Hermitian, we mentioned that the corresponding symplectic matrix $\bF_\bG$ must be an involution, which we also mentioned can be written as a product of $k \in \{r,r+1\}$ transvections, $r = 2m - \dim(\Fix(\bF_\bG))$, where at least $r$ are independent.
Theorem~\ref{T-T} settles the scenario when $\bF_\bG$ is a product of only independent transvections. 
When the additional transvection $\bT_{r+1}$ is dependent of the other $r$ transvections, then multiplying $\bG = \bT_1\cdots\bT_r$ with $\bT_{r+1}$ may preserve the support of $\bG$ (see Example~\ref{ex-CNOT}) or reduce the support of $\bG$. In the latter instance the supported is reduced by half. Keeping track of the support of $\bG\bT_{r+1}$ becomes tedious and involves sign chasing that depends on the commutativity relation of $\bT_1,\ldots,\bT_r$.
\end{rem}

\section{On (Generalized) Semi-Clifford Matrices}

For $k\geq 3$ the levels $\cC^{(k)}$ of the Clifford hierarchy do not form a group, and thus a complete characterization becomes challenging.
In~\cite{ZCC08} the authors use the notion of semi-Clifford matrices to achieve partial results. 
A unitary matrix $\bU \in \U(N)$ is called \emph{semi-Clifford} if there exists a MCS $S_1\subset\HW$ such that $S_2 = \bU S_1 \bU^\dagger$ is also MCS. 
Since the Clifford group $\cl_N$ permutes stabilizers of a given dimension, a Clifford matrix is trivially semi-Clifford. 
It is shown in~\cite{ZCC08} that for $m = 1,2$ qubits the Clifford hierarchy is comprised of semi-Clifford matrices, and for $m= 3$ qubits the third level $\cC^{(3)}$ is comprised of semi-Clifford matrices. 
Moreover, they show that for $m>2$ qubits that there exist non semi-Clifford matrices in each level $\cC^{(k)},k>3$, and conjecture that the third level $\cC^{(3)}$ is comprised of semi-Cliffords for any number of qubits. The conjecture was disproved by Gottesman and Mochon via a counterexample with $m = 7$ qubits; see~\cite{BS08}. 
On the other hand, the diagonal elements of each level, denoted $\cC^{(k)}_d$ do form a group~\cite[Prop.~4]{ZCC08}, and are completely characterized in~\cite{CGK17}. 
The QFD gates of~\cite{RCP19} represent all 1-  and 2-local diagonal gates in the hierarchy, and thus forming a particularly nice subclass of diagonal gates. 
%form a particularly nice class of diagonal gates in the hierarchy.

%Let $Z$ be the MCS of diagonal Pauli matrices, that is, $Z = \bE(\textbf{0}_m,\bI_m)$. Let also $\bG_1,\bG_2\in \cl_N$ be such that $\bG_1Z\bG_1^\dagger = S_1$ and $\bG_2S_2\bG_2^\dagger = Z$. Then $\bG_2\bU\bG_1$ is a semi-Clifford matrix that fixes $Z$. Thus, WLOG we focus on semi-Clifford matrices that fix $Z$.

\begin{rem}\label{R-cleq}
Multiplying by Clifford matrix preserves the levels of the hierarchy. Thus, without loss of generality, we will consider semi-Clifford matrices (and any matrix in the hierarchy) up to multiplication by Clifford matrices. This enables us to adjust any semi-Clifford $\bU$ matrix so that it fixes any given given MCS $S$. Indeed, assume $\bU S_1 \bU^\dagger = S_2$. Let $\bG_1,\bG_2\in \cl_N$ be such that $\bG_1 S \bG_1^\dagger = S_1$ and $\bG_2S_2\bG_2^\dagger = S$. Then $\bG_2\bU\bG_1$ is a semi-Clifford matrix that fixes $S$. As mentioned earlier, by~\cite[Alg. 1]{RCKP18}, there exists $\bG_3\in \cl_N$ such that $\bG_3\bG_2\bU\bG_1$ fixes $S$ pointwise.
\end{rem}

\begin{theo}\label{T-SC}
    Let $\bC\in \cC^{(k)}$ be a unitary matrix that fixes the group of diagonal Paulis $Z_N = \bE({\bf 0}_m, \bI_m)$. Then $\bC = \bD\bE(\ba,{\bf 0})\bG_D(\bP)$, for some diagonal $\bD \in \cC_d^{(k)}, \bP\in\GL(m)$, and $\ba \in\F_2^m$.
\end{theo}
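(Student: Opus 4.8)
The plan is to exploit the hypothesis that $\bC$ fixes the maximal commutative subgroup $Z_N$ setwise, and then show that the resulting action forces $\bC$ to have the claimed factored form. First I would analyze how $\bC$ permutes the generators of $Z_N$. Since $\bC \bE({\bf 0},\bb)\bC^\dagger \in Z_N$ for every $\bb$, and since conjugation by a unitary is an automorphism of the group $Z_N \cong (\F_2^m,\oplus)$ (respecting the $\pm$ signs), there is an invertible $\F_2$-linear map recording this action; I would argue this map is realized by some $\bP^{-1}\in\GL(m)$, i.e.\ $\bC\bE({\bf 0},\bb)\bC^\dagger = \pm\bE({\bf 0},\bb\bP^{-1})$ for all $\bb$. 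Replacing $\bC$ by $\bC\bG_D(\bP)$ (which is legitimate since we only care about $\bC$ up to right multiplication by a Clifford, and this preserves the level by the closure property) we may assume $\bC$ fixes $Z_N$ \emph{pointwise up to sign}: $\bC\bE({\bf 0},\bb)\bC^\dagger = (-1)^{f(\bb)}\bE({\bf 0},\bb)$ for some function $f:\F_2^m\to\F_2$. Because conjugation is multiplicative and the $\bE({\bf 0},\bb)$ multiply (mod signs) additively, $f$ is linear, hence $f(\bb) = \ba\bb\T$ for a fixed $\ba\in\F_2^m$; conjugating further by the Pauli $\bE(\ba,{\bf 0})$ (again a Clifford, absorbed into the final $\bE(\ba,{\bf 0})$ factor and the $\bG_D(\bP)$ factor via the commutation relations~\eqref{e-Eab}) we reduce to the case that $\bC$ commutes with every $\bE({\bf 0},\bb)$, i.e.\ $\bC$ commutes with all of $Z_N$.

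The second step is to show that a unitary commuting with the full group $Z_N$ of diagonal Paulis must itself be diagonal. Here I would use that $Z_N$ contains the $m$ independent tensor-product Paulis $Z$ on each qubit, whose common eigenspaces are exactly the one-dimensional spans of the standard basis vectors $|\bv\rangle$; a unitary commuting with all of them preserves each such line and hence is diagonal. Call this diagonal unitary $\bD$ (so at this stage $\bC$ has been reduced to $\bD$). Since multiplication by Cliffords preserves the level $\cC^{(k)}$, and since the reductions performed were all by Cliffords, $\bD\in\cC^{(k)}$; being diagonal, $\bD\in\cC_d^{(k)}$. Unwinding the two reductions then yields $\bC = \bD\,\bE(\ba,{\bf 0})\,\bG_D(\bP)$, which is the assertion — modulo bookkeeping of where the phase factors from~\eqref{e-Eab} land and the fact that $\bE(\ba,{\bf 0})\bG_D(\bP)$ is exactly the kind of ``permutation-plus-$X$'' operator that appears in the statement.

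The main obstacle I anticipate is the careful sign/phase bookkeeping in the reduction steps: verifying that the linear map on $Z_N$ induced by conjugation is genuinely a $\GL(m)$ element (not merely a set bijection), that the sign function $f$ is truly additive rather than affine, and that peeling off $\bG_D(\bP)$ and $\bE(\ba,{\bf 0})$ recombines correctly so that the leftover piece is diagonal and the three factors appear in the stated order $\bD\bE(\ba,{\bf 0})\bG_D(\bP)$ rather than some permuted order. Each of these follows from~\eqref{e-Dab}–\eqref{e-Eab} and the homomorphism property of $\Phi$, but getting the order and the phases exactly right is the delicate part; the structural content — ``commutes with a full $Z_N$ $\Rightarrow$ diagonal'' plus ``acts on $Z_N$ $\Rightarrow$ acts through $\GL(m)$ and signs'' — is straightforward.
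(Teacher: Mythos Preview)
Your proposal is correct and takes a genuinely different route from the paper. The paper argues as follows: since $\bC$ fixes $Z_N$, it permutes the common eigenvectors $|\bv\rangle$ of $Z_N$, so $\bC=\bD\Pi$ is monomial with $\bD$ diagonal and $\Pi$ a coordinate permutation; because $\Pi$ must also fix $Z_N$, and the diagonals of the elements of $Z_N$ are exactly the first-order Reed--Muller codewords $\reed(1,m)$, the permutation $\Pi$ induces an automorphism of $\reed(1,m)$; invoking the classical fact $\Aut(\reed(1,m))=\GA(m)$ then forces $\pi(\bv)=\bv\bP\oplus\ba$, i.e.\ $\Pi=\bE(\ba,{\bf 0})\bG_D(\bP)$.

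Your argument bypasses both the monomial decomposition and the Reed--Muller machinery: you read off $\bP$ directly from the induced automorphism of $Z_N\cong\F_2^m$, read off $\ba$ from the residual sign character, and only then observe that what remains commutes with all of $Z_N$ and is therefore diagonal. This is more elementary and self-contained; the paper's approach, by contrast, makes a nice structural connection to coding theory that yields the affine form of the permutation in one stroke. Two small points on your write-up: (i) where you say ``conjugating further by $\bE(\ba,{\bf 0})$'' you actually want one-sided \emph{multiplication} --- conjugation by $\bE(\ba,{\bf 0})$ leaves the sign character unchanged, whereas left-multiplication cancels it; (ii) the unwinding gives $\bC=\bE(\ba,{\bf 0})\,\bD\,\bG_D(\bP^{-1})$ at first, and you need the extra step $\bE(\ba,{\bf 0})\bD=\bD'\bE(\ba,{\bf 0})$ with $\bD'=\bE(\ba,{\bf 0})\bD\bE(\ba,{\bf 0})$ diagonal (and still in $\cC_d^{(k)}$ since Clifford conjugation preserves the level) to reach the stated order.
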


We will make use of the structure of first order Reed-Muller codes to prove Theorem~\ref{T-SC}. Let $C = \{(\bv\bb\T\!\!\mod 2)_{\bv\in\F_2^m}\mid \bb\in\F_2^m\}$. Then, the \emph{first-order Reed-Muller} code is the linear $[2^m,m+1,2^{m-1}]_2$-code
\begin{equation}
    \reed(1,m) = C\cup\{\bc \oplus {\bf 1}\mid \bc\in C\}.
\end{equation}
The automorphism group of $\reed(1,m)$ is the general affine group $\GA(m)$ of maps $\bv\longmapsto \bv\bP \oplus \ba, \bP\in \GL(m),\ba\in\F_2^m$; see~\cite[Chapter~13]{MWS77} for instance.
\vspace{.08 in}

\textit{Proof of Theorem~\ref{T-SC}.} Assume that $\bC\in \cC^{(k)}$ fixes $Z_N$, and let $v\in \C^N$ be a common eigenvector of all $\bD({\bf 0},\bb)\in Z_N$. 
Then, by assumption, we have
\begin{equation}
\bE({\bf 0},\bb)\bC v = \bC\bE({\bf 0},\bb')v = \pm \bC v,  
\end{equation}
which in turn implies that $\bC v$ is also a common eigenvector of $Z_N$. Thus $\bC$ maps the common eigenvector $|\bv\r$ to another common eigenvector which is of the form $\alpha_\bv|\pi(\bv)\r$ for some $\pi(\bv)\in \F_2^m$ and $\alpha_\bv\in\C$. 
In other words, $\bC$ is a monomial map, that is $\bC = \bD\Pi$, where $\bD$ is the diagonal matrix with entry $\alpha_\bv$ in position $\pi(\bv)$ and $\Pi$ is the permutation $\bv \longmapsto \pi(\bv)$. The assumption $\bC\in \cC^{(k)}$ implies $\bD \in \cC^{(k)}_d$. 
By construction, the diagonals of Paulis in $Z_N$ are of form $\pm((-1)^{\bv\bb\T})_{\bv\in\F_2^m}$, and we point out that the exponents of such diagonals are precisely the elements of $\reed(1,m)$. Thus $\Pi$ induces an isometry on $\reed(1,m)$, which as we mentioned must be an invertible affine map. That is,  $\Pi = \bE(\ba,{\bf 0})\bG_D(\bP)$ for some diagonal $\bP\in\GL(m)$, and $\ba \in\F_2^m$. In particular we have $\Pi\in \cl_N$, which along with the assumption $\bC \in \cC^{(k)}$ implies $\bD \in \cC_d^{(k)}$.
 \hfill \qedsymbol

\begin{rem}
In~\cite{ZCC08} it was shown that a semi-Clifford $\bC\in \cC_d^{(k)}$ is of the form $\bC = \bG_1\bD\bG_2$ for some $\bG_1,\bG_2\in \cl_N$ and $\bD\in \cC_d^{(k)}$. 
Theorem~\ref{T-SC} further extends this result by characterizing the Clifford matrices that appear into decomposition of $\bC$. 
Thus, we obtain a complete characterization of semi-Clifford elements in the Clifford hierarchy. We believe that this result, along with the notion of the support can be used in many applications, e.g., design of flag gadgets.
\end{rem}

The argument of Theorem~\ref{T-SC} holds in a slightly more general setting.

\begin{rem}
\begin{arabiclist}
 \item Let $\bC$ be any unitary matrix that fixes a MCS $S = \l\bE_1,\ldots,\bE_m\r$.
% Then $\bC$ commutes with all $\bE\in S$. 
For $\bd\in \F_2^m$ denote $S_\bd:=\l (-1)^{d_1}\bE_1,\ldots, (-1)^{d_m}\bE_m\r$, and put $\cF_+(S_\bd) := |\psi_\bd\r$. For any $\bE \in S_\bd$ we have 
\begin{equation}
    \bE\bC|\psi_\bd\r = \bC\bE'|\psi_\bd\r = \pm \bC|\psi_\bd\r,
\end{equation}
and thus $\bC|\psi_\bd\r\in \cF_+(S_{\bd'})$ for some $\bd' \in \F_2^m$. This means that $\bC|\psi_\bd\r = \lambda_\bd|\psi_{\bd'}\r$ for some eigenvalue $\lambda_\bd$. In particular, $\bC$ is a monomial matrix with respect to the eigenbasis $\cE_S:= \{|\psi_\bd\r\mid \bd \in \F_2^m\}$.
\item Let $\bC$ be a unitary matrix that fixes the \emph{span} of $Z_N$, that is, $\bC$ maps any diagonal to another diagonal. Then $\bC$ is a monomial matrix. In particular, any semi-Clifford matrix is a monomial matrix up to some Clifford correction; see also~\cite[Prop.~2]{ZCC08}.% and Corollary~\ref{C-BS08} for the converse.
\end{arabiclist}
\end{rem}

Let $\bC \in \cC^{(3)}$ be such that it fixes some subgroup $\widehat{S}$ (of $\cH\cW_N$) under conjugation; see also Remark~\ref{R-cleq}. Then, by~\cite[Alg. 1]{RCKP18}, there exists $\bG \in \cl_N$, produced as a sequence of transvections, such that $\widehat{\bC}: = \bG\bC$ fixes $\widehat{S}$ \emph{point-wise}. Let $\widehat{S}^\sperp$ be all the Pauli matrices that commute with elements of $\widehat{S}$. Proceeding similarly as in the proof of Proposition~\ref{P-supp} we have the following.
\begin{prop}
$\supp(\widehat{\bC}) \subset \widehat{S}^\sperp$, and thus $\supp(\bC) \subset S: = \{\bE\bE'\mid \bE \in \supp(\bG^{-1}), \, \bE' \in \widehat{S}^\sperp \}$.
\end{prop}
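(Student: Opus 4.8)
The plan is to mimic the argument of Proposition~\ref{P-supp}, the only adaptation being that $\widehat{\bC}$ is a third-level unitary rather than a Clifford, so I cannot directly invoke a symplectic representation. First I would fix a Pauli $\bE(\bv)$ with $\bv \in \widehat{S}$, so that by construction $\widehat{\bC}\bE(\bv)\widehat{\bC}^\dagger = \bE(\bv)$ (this is exactly what ``fixes $\widehat{S}$ point-wise'' buys us, via \cite[Alg.~1]{RCKP18}). Then, for any $\bc \notin \widehat{S}^\sperp$, choose $\bv \in \widehat{S}$ with $\inners{\bc}{\bv} = 1$, so $\bE(\bv)$ anticommutes with $\bE(\bc)$. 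The same four-line trace computation as in Proposition~\ref{P-supp} — inserting $\bE(\bv)^\dagger\bE(\bv) = \bI_N$, commuting $\bE(\bv)$ past $\widehat{\bC}$ at the cost of nothing, then past $\bE(\bc)$ at the cost of a sign — gives $\Tr(\widehat{\bC}\bE(\bc)) = -\Tr(\widehat{\bC}\bE(\bc))$, hence $\bc \notin \supp(\widehat{\bC})$. This establishes $\supp(\widehat{\bC}) \subset \widehat{S}^\sperp$.

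For the second inclusion, I would use the translation behaviour of supports recorded in the Remark following \eqref{e-supp}: multiplying a matrix by a Pauli $\bD(\bx)$ translates its support by $\{\bx\}$. Since $\bC = \bG^{-1}\widehat{\bC}$ and $\bG^{-1}$, being a Clifford, is itself supported on a coset $\bE_0 S_0$ of a group (Corollary~\ref{cor}), the support of the product $\bG^{-1}\widehat{\bC}$ is contained in the ``sumset'' $\{\bE\bE' \mid \bE \in \supp(\bG^{-1}),\ \bE' \in \supp(\widehat{\bC})\} \subseteq \{\bE\bE' \mid \bE \in \supp(\bG^{-1}),\ \bE' \in \widehat{S}^\sperp\}$. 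The containment $\supp(\bM\bN) \subseteq \supp(\bM)\cdot\supp(\bN)$ follows by writing $\bM = \sum_{\bc}\alpha_\bc \bE(\bc)$, $\bN = \sum_{\bd}\beta_\bd \bE(\bd)$, multiplying out, using \eqref{e-Eab} to see that $\bE(\bc)\bE(\bd)$ is a scalar times $\bE(\bc+\bd)$, and observing that a coefficient $\alpha_{\bc'}$ of $\bM\bN$ is a sum over all $(\bc,\bd)$ with $\bc+\bd = \bc'$, so it can be nonzero only if some such pair has both $\alpha_\bc \neq 0$ and $\beta_\bd \neq 0$.

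The main subtlety — not really an obstacle — is to make sure the $\widehat{\bC}$-conjugation step is legitimate: I am using only that $\widehat{\bC}$ is unitary and that it commutes with the particular Hermitian Pauli $\bE(\bv)$, which is guaranteed by the point-wise fixing. No properties of the Clifford hierarchy beyond $\bC \in \cC^{(3)} \Rightarrow \bG\bC \in \cC^{(3)}$ (closure under Clifford multiplication, already cited in the introduction and used to justify invoking \cite[Alg.~1]{RCKP18}) are needed for the trace argument; the third-level hypothesis only matters insofar as it guarantees such a fixed subgroup $\widehat{S}$ exists in the contexts where this proposition is applied. I would also remark that, just as in Example~\ref{ex-CNOT}, the first inclusion can be an equality, so the bound is tight in general; and that combining this with Theorem~\ref{T-T} and the characterization of $\supp(\bG^{-1})$ from Proposition~\ref{P-lem} gives an explicit description of the sumset $S$ once a transvection decomposition of $\bG$ is fixed.
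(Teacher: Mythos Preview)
Your proposal is correct and follows exactly the route the paper indicates: the paper's entire proof is the sentence ``Proceeding similarly as in the proof of Proposition~\ref{P-supp},'' and your first paragraph reproduces that trace argument verbatim with $\widehat{\bC}$ in place of $\bG$. Your second paragraph spells out the sumset containment $\supp(\bM\bN)\subseteq\supp(\bM)\cdot\supp(\bN)$ that the paper leaves implicit in the word ``thus''; this is correct and arguably an improvement in exposition.
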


\begin{cor}\label{C-corSemi}
Let $\bC$ be a unitary matrix and $S$ be a MCS. If $\bC$ fixes $S$ pointwise then $\supp(\bC) \subset S$. The converse is also true. 
This property characterizes semi-Clifford matrices up to multiplication by Clifford. In particular, as for Hermitian Clifford matrices, we have that $\bC$ is supported on a \emph{commutative} subgroup.
\end{cor}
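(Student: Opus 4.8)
The plan is to establish the two implications of the first claim separately --- reusing the argument of Proposition~\ref{P-supp} almost verbatim --- and then deduce the semi-Clifford characterization from Remark~\ref{R-cleq}. Throughout I identify the MCS $S$ with the corresponding maximal totally isotropic subspace of $\Fm$, so that $S^\sperp = S$; this self-duality is the single structural fact the argument rests on.

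First I would prove that if $\bC$ fixes $S$ point-wise under conjugation then $\supp(\bC)\subseteq S$. One runs the proof of Proposition~\ref{P-supp} with $C_\bG$ replaced by $S$ and $\bG$ by $\bC$: given $\bc\notin S^\sperp=S$, choose $\bv\in S$ with $\inners{\bc}{\bv}=1$. Then $\bE(\bv)$ commutes with $\bC$ (since $\bC$ fixes $S$ point-wise) and anticommutes with $\bE(\bc)$, so inserting $\bI_N=\bE(\bv)^\dagger\bE(\bv)$ into $\Tr\big(\bC\bE(\bc)\big)$, pushing $\bE(\bv)$ past $\bC$ and then past $\bE(\bc)$, gives $\Tr\big(\bC\bE(\bc)\big)=-\Tr\big(\bC\bE(\bc)\big)=0$; hence $\bc\notin\supp(\bC)$.

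For the converse, suppose $\supp(\bC)\subseteq S$ and write $\bC=\sum_{\bE\in S}\alpha_\bE\bE$. Since $S$ is commutative, every $\bE(\bv)\in S$ commutes with each summand and therefore with $\bC$; unitarity then gives $\bC\bE(\bv)\bC^\dagger=\bE(\bv)$, so $\bC$ fixes $S$ point-wise. This direction is immediate, and together with Corollary~\ref{cor} and Theorem~\ref{T-T} it also settles the final sentence: $S$ is a commutative subgroup, so any $\bC$ with $\supp(\bC)\subseteq S$ is supported on a commutative subgroup, exactly as for Hermitian Cliffords.

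It remains to connect this to semi-Cliffords. If $\bU$ is semi-Clifford, with $\bU S_1\bU^\dagger=S_2$ for MCSs $S_1,S_2$, then by Remark~\ref{R-cleq} a suitable product of Cliffords $\bG$ turns $\bU$ into $\bG\bU$ fixing a prescribed MCS $S$ point-wise (first set-wise, then point-wise via~\cite[Alg.~1]{RCKP18}), which by the two implications above is the same as $\supp(\bG\bU)\subseteq S$. Conversely, if $\supp(\bG\bU)\subseteq S$ for some Clifford $\bG$, then $\bG\bU$ fixes $S$ point-wise, hence $\bU S\bU^\dagger=\bG^\dagger S\bG$ is again a MCS and $\bU$ is semi-Clifford. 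The only care needed --- and the closest thing to an obstacle --- is the bookkeeping in Remark~\ref{R-cleq}: applying the Clifford corrections in the right order and observing that semi-Cliffordness is preserved under both left and right multiplication by Cliffords, since Cliffords permute MCSs. No genuinely hard step arises.
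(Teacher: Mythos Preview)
Your proposal is correct and follows essentially the same approach as the paper: the paper derives the corollary from the preceding proposition (whose proof is explicitly said to proceed ``similarly as in the proof of Proposition~\ref{P-supp}''), together with the self-duality $S^\sperp=S$ of a MCS, and Remark~\ref{R-cleq} for the semi-Clifford characterization. One small wording issue: with only \emph{left} Clifford multiplication you cannot arrange $\bG\bU$ to fix a \emph{prescribed} MCS, only the MCS $S_1$ that $\bU$ already sends to another MCS; this is all that is needed, so just drop the word ``prescribed'' (or allow two-sided multiplication as in Remark~\ref{R-cleq}).
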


In the reminder of this section we show that Corollary~\ref{C-corSemi} holds for the entire third level $\cC^{(3)}$ of the hierarchy (always up to Cliffords). 
Note that this is not a trivial step because, as mentioned before, there exist elements in $\cC^{(3)}$ that are not semi-Cliffords~\cite{BS08}.

\begin{lem}
\label{lem:C3fixesPauli}
For $\bC \in \cC^{(3)}$ there exists a Pauli $\widetilde{\bE}$ such that $\bC\widetilde{\bE}\bC^{\dagger}$ is also a Pauli. As a consequence, there exists a Clifford correction $\bG$ such that $\bG\bC$ fixes (i.e., commutes with) some Pauli matrix.
\end{lem}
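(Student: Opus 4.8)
The plan is to argue by contradiction: suppose $\bC \in \cC^{(3)}$ conjugates \emph{every} nonidentity Hermitian Pauli $\bE(\bc)$ to a Clifford that is strictly in the second level (not in $\cH\cW_N$). First I would record the structural consequences of $\bC \in \cC^{(3)}$ gathered in the paragraph opening Section~4: for each $\bc \in \Fm$, $\bG_\bc := \bC\bE(\bc)\bC^\dagger$ is a \emph{Hermitian} Clifford, hence $\bF_{\bG_\bc} := \Phi(\bG_\bc)$ is a symplectic involution; moreover the family $\{\bG_\bc\}_{\bc}$ consists of pairwise commuting matrices (because the $\bE(\bc)$'s have at most two distinct commutation behaviours and conjugation is multiplicative), so $\{\bF_{\bG_\bc}\}_\bc$ is a family of commuting symplectic involutions. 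By the fact quoted after Corollary~\ref{cor} in Section~4, any group of commuting symplectic involutions is conjugate into $S_U$; so after a global Clifford correction $\bG_0$ (absorb it into $\bC$, which is legitimate by Remark~\ref{R-cleq} since left multiplication by a Clifford preserves the level), we may assume every $\bF_{\bG_\bc} = \bF_U(\bS_\bc)$ for some $\bS_\bc \in \Sym(m)$, and correspondingly each $\bG_\bc$ equals $\bE(\bd_\bc)\bG_U(\bS_\bc)$ for some Pauli $\bE(\bd_\bc)$ up to phase, by Theorem~\ref{T-T} and the classification of Hermitian Cliffords.

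Next I would exploit the \emph{multiplicativity} of conjugation to turn the map $\bc \mapsto \bF_U(\bS_\bc)$ into something rigid. From $\bE(\ba)\bE(\bb) = \pm i^{\ast}\bE(\ba+\bb)$ (Remark~\ref{R-DEab}) we get $\bG_{\ba+\bb} = \pm i^\ast \bG_\ba \bG_\bb$, hence $\bF_U(\bS_{\ba+\bb}) = \bF_U(\bS_\ba)\bF_U(\bS_\bb)$, i.e.\ $\bS_{\ba+\bb} = \bS_\ba \oplus \bS_\bb$: the assignment $\bc \mapsto \bS_\bc$ is $\F_2$-linear from $\Fm$ to $\Sym(m)$. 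Now the contradiction hypothesis says $\bS_\bc \neq {\bf 0}$ for every $\bc \neq {\bf 0}$, so this linear map is \emph{injective}. But $\dim_{\F_2}\Fm = 2m$ while $\dim_{\F_2}\Sym(m) = m(m+1)/2$, which is $\geq 2m$ for all $m\geq 3$ — so a naive dimension count does \emph{not} immediately close the argument, and this is where the real work lies.

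The key additional constraint I would bring in is that the \emph{Pauli part} $\bd_\bc$ must also behave coherently, and that the triple product $\bG_\ba \bG_\bb \bG_{\ba+\bb}$ being a phase forces compatibility conditions (a $2$-cocycle identity) relating the symplectic form $\inners{\ba}{\bb}$, the bilinear form $\bb\bS_\ba\T$ extracted from the symmetric matrices, and the commutators of the $\bE(\bd_\bc)$'s. Concretely: conjugating the Pauli relation $\bE(\ba)\bE(\bb)\bE(\ba) = (-1)^{\inners{\ba}{\bb}}\bE(\bb)$ by $\bC$ yields $\bG_\ba \bG_\bb \bG_\ba = (-1)^{\inners{\ba}{\bb}}\bG_\bb$, and reading off how $\bG_U(\bS_\ba)$ acts on $\bE(\bd_\bb)$ (a computation governed by the quadratic form in~\eqref{e-TrFUS}) pins down $\inners{\ba}{\bb} = \bb\bS_\ba\bd_{\text{something}}^{\!\T} \oplus (\cdots)$. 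The upshot I expect is that $\bc \mapsto \bS_\bc$ cannot be injective while remaining compatible with the symplectic structure on $\Fm$; the image must contain a nonzero element in the \emph{radical} of this compatibility pairing, giving a $\bc_0 \neq {\bf 0}$ with $\bS_{\bc_0} = {\bf 0}$. For that $\bc_0$, $\bG_{\bc_0} = \bE(\bd_{\bc_0})\bG_U({\bf 0}) = \bE(\bd_{\bc_0})$ up to phase is itself a Pauli, contradicting the hypothesis. Taking $\widetilde{\bE} = \bE(\bc_0)$ (and undoing the Clifford correction $\bG_0$: the lemma only asserts existence of \emph{some} $\bG$) gives the claim, and the ``consequently'' clause then follows from \cite[Alg.~1]{RCKP18} exactly as in the paragraph preceding Corollary~\ref{C-corSemi}. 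The main obstacle is making the cocycle/compatibility computation precise enough to force a nonzero element into the kernel — the bookkeeping of signs and of $\mod 4$ phases in~\eqref{e-Eab} is delicate, and one must be careful that the Pauli parts $\bd_\bc$ do not themselves conspire to rescue injectivity.
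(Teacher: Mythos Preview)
Your proposal has a genuine gap, which you yourself flag: the dimension count $2m$ versus $m(m+1)/2$ fails for $m\geq 3$, and the promised repair via ``cocycle/compatibility'' constraints from the Pauli parts $\bd_\bc$ is never carried out. It is not at all clear that the $\Z_4$-phase bookkeeping you gesture at would actually force a nonzero kernel element; the constraints you would extract from $\bG_\ba\bG_\bb\bG_\ba = (-1)^{\inners{\ba}{\bb}}\bG_\bb$ collapse to tautologies at the symplectic level (since $S_U$ is abelian), so everything must come from the Pauli residues $\bd_\bc$ and their interaction with the $\bS_\bc$'s, and you have not shown how to organize that into an obstruction to injectivity. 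As written, the argument stops precisely where the difficulty begins.

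The paper's proof bypasses all of this with a one-line counting argument. The composite $\varphi_\bC = \Phi\circ\phi_\bC:\cP\HW\to\Sp(2m)$ is a homomorphism, so its image $G$ is a $2$-group (being a quotient of $\cP\HW\cong\F_2^{2m}$). Let $G$ act on $\Fm\setminus\{{\bf 0}\}$; every orbit has size a power of $2$, but the set has odd cardinality $2^{2m}-1$, so there is a singleton orbit, i.e.\ a nonzero $\bc$ fixed by every element of $G$. Setting $\widetilde{\bE}=\bE(\bc)$, this says $\widetilde{\bE}$ commutes or anticommutes with every $\bC\bE\bC^\dagger$. A short direct computation then finishes: with $\widetilde{\bC}:=\widetilde{\bE}\bC\widetilde{\bE}$ one checks that $(\bC\widetilde{\bC}^\dagger)\bE(\bC\widetilde{\bC}^\dagger)^\dagger=\pm\bE$ for every Pauli $\bE$, hence $\bC\widetilde{\bC}^\dagger$ is itself a Pauli $\bE'$, and then $\bC\widetilde{\bE}\bC^\dagger=\bE'\widetilde{\bE}$. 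No conjugation into $S_U$, no linear-algebra over $\Sym(m)$, no sign-chasing --- just the standard fact that a $p$-group acting on a set of order coprime to $p$ has a fixed point.
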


\begin{proof}
Let $\bC \in \cC^{(3)}$ and consider the map
\begin{equation*}\label{e-phiC}
    \varphi_\bC: \left\{ \begin{array}{ccccc} \cP\HW & \xrightarrow{\phi_\bC} & \cl_N & \xrightarrow{\,\,\Phi\,\,} & \Sp(2m) \\ \bE & \longmapsto & \bC\bE\bC^\dagger & \longmapsto & \Phi(\bC\bE\bC^\dagger) \end{array}\right.
\end{equation*}
where $\Phi$ is the map from~\eqref{e-Phi} and $\cP\HW$ is the projective form of $\HW$ that ignores phases. 
Then $\ker \varphi_\bC \subset \cP\HW$ has size $2^k$ for some $k \geq 0$. 
So, we see that $G := \im \varphi_\bC \subset \Sp(2m)$ has size $2^{2m-k}$. 

Let $G$ act on $\Fm \setminus \{{\bf 0}\}$. 
Since the size of an orbit must divide the size of the group $G$, each orbit has size a power of $2$ as well. 
However, since the orbits partition a set of size $2^{2m} - 1$ (odd), there must exist an orbit of odd size, and that size must be $2^0 = 1$. 
This means that there exists ${\bf 0} \neq \bc \in \Fm$ that is fixed by all symplectic matrices $\Phi(\bC\bE\bC^\dagger)$. The definition of $\Phi$ yields that the Hermitian Pauli $\widetilde{\bE} := \bE(\bc)$ either commutes or anticommutes with all $\bC\bE\bC^\dagger$.
In other words
\begin{equation}\label{e-wececd}
    \widetilde{\bE}\bC\bE\bC^\dagger = \alpha_\bE\bC\bE\bC^\dagger\widetilde{\bE}, \quad \alpha_\bE = \pm 1,
\end{equation}
for all Paulis $\bE$. Now put $\widetilde{\bC} = \widetilde{\bE}\bC\widetilde{\bE}$. Let also $\sigma_\bE = \pm 1$ be such that $\widetilde{\bE}\bE = \sigma_\bE \bE\widetilde{\bE}$. We have that
\begin{equation}\label{e-wececd1}
    \widetilde{\bE}\bC = \widetilde{\bC}\widetilde{\bE} \text{\quad and\quad} \widetilde{\bE}\bC^\dagger = \widetilde{\bC}^\dagger\widetilde{\bE},
\end{equation}
which combined with~\eqref{e-wececd} yields
\begin{equation}
    \alpha_\bE\bC\bE\bC^\dagger = \sigma_\bE\widetilde{\bC}\bE\widetilde{\bC}^\dagger
\end{equation}
for all Paulis $\bE$. 
Now it is easy to see that $\bC\widetilde{\bC}^\dagger$ is a Pauli $\bE'$, and thus~\eqref{e-wececd1} implies $\bC\widetilde{\bE}\bC^\dagger = \bE'\widetilde{\bE}$. 
%But the Hermitian Paulis $\bE$ form an orthonormal basis (see Section~\ref{sec-support}), and in addition, the conjugate action does not affect phases. This means that $\bC = \widetilde{\bC}$ and thus $\widetilde{\bE}\bC = \bC\widetilde{\bE}$.
\end{proof}

\begin{rem}
The proof of Lemma~\ref{lem:C3fixesPauli} could have been concluded using the language of stabilizer codes.
With the same notation, let $\widetilde{\bE}$ be such that it either commutes or anticommutes with all $\bC\bE\bC^\dagger \in \cl_N$. 
Now consider the stabilizer group $S = \l \widetilde{\bE}\r$ and the corresponding stabilizer codes $\cF_\varepsilon := \cF_\varepsilon(S)$, for $\varepsilon = \pm 1$ (see~\eqref{e-qecc}). 
We have that $\bC\bE\bC^\dagger \cF_\varepsilon = \pm \cF_\varepsilon$ for all Paulis $\bE$.
In other words, for any $v \in \cF_\varepsilon$, we have that  
\begin{equation}\label{e-CECd}
    \varepsilon \widetilde{\bE} \cdot \bC\bE\bC^\dagger v = \pm\left(\bC\bE\bC^\dagger \cdot \varepsilon \widetilde{\bE} v\right) = \pm \bC\bE\bC^\dagger v
\end{equation}
holds for all $\bE$. Next, express $\bC = \sum_\bE \alpha_\bE \bE$ and sum both sides in~\eqref{e-CECd} to obtain
\begin{equation}
    \varepsilon \widetilde{\bE}\left(\sum_\bE\alpha_\bE \bE\right)v = \varepsilon'\left(\sum_\bE\alpha_\bE \bE\right)v.
\end{equation}
In other words, $\varepsilon \widetilde{\bE}\bC v = \varepsilon'\bC v$, and thus $\bC$ permutes $\{\cF_\varepsilon\}$. 
This is equivalent with $\bC$ mapping $\widetilde{\bE}$ to some other Pauli.
\end{rem}

Next we prove the main result about the support of gates in $\cC^{(3)}$, which can then be straightforwardly used to show that every gate in $\cC^{(3)}$ is generalized semi-Clifford. Recall that a generalized semi-Clifford matrix is a unitary matrix that maps under conjugation the span of some MCS to the span of some other MCS.

\begin{theo}\label{T-main}
Let $\bC$ be a unitary matrix from $\cC^{(3)}$.
Then there exists a Clifford $\bG$ such that $\bG \bC$ is supported on a maximal commutative subgroup of $\HW$.
\end{theo}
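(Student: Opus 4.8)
\emph{The plan} is to induct on the number of qubits $m$, using Lemma~\ref{lem:C3fixesPauli} to strip off one qubit at each step and Corollary~\ref{C-corSemi} to recognise the target conclusion. For the base case $m\le 2$ one invokes the classical fact~\cite{ZCC08} that every element of $\cC^{(3)}$ is semi-Clifford, which by Corollary~\ref{C-corSemi} is exactly the statement that such a matrix becomes supported on a MCS after left multiplication by a suitable Clifford.

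For the inductive step, take $\bC\in\cC^{(3)}$ on $m$ qubits. By Lemma~\ref{lem:C3fixesPauli} there is a Clifford $\bG$ with $\bG\bC$ commuting with a Hermitian Pauli, and after conjugating by one more Clifford --- which only changes $\bG\bC$ by left and right Clifford multiplication and hence preserves the level --- we may assume $\bC$ itself commutes with $Z^{(1)}:=\bE({\bf 0},\be_1)$. Writing $\C^N=\C^2\otimes\C^{2^{m-1}}$ with the first tensor factor carrying qubit $1$, commuting with $Z^{(1)}$ is precisely the statement $\bC=|0\rangle\langle0|\otimes\bC_0+|1\rangle\langle1|\otimes\bC_1$ for unitaries $\bC_0,\bC_1$ on $\C^{2^{m-1}}$. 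The next step is to pin down the blocks. For a Pauli $\bE'$ on qubits $2,\dots,m$ the conjugate $\bC(\bI\otimes\bE')\bC^\dagger$ is a Clifford commuting with $Z^{(1)}$; since a block-diagonal Clifford on $m$ qubits restricts on each eigenspace of $Z^{(1)}$ to a Clifford on $m-1$ qubits (one checks this through the homomorphism~\eqref{e-Phi}), and that restriction is $\bC_i\bE'\bC_i^\dagger$, letting $\bE'$ vary yields $\bC_0,\bC_1\in\cC^{(3)}$ on $m-1$ qubits. Reading off $\bC X^{(1)}\bC^\dagger\in\cl_N$ the same way (with $X^{(1)}:=\bE(\be_1,{\bf 0})$) shows moreover that $\bH:=\bC_1\bC_0^\dagger$ is a Clifford on $m-1$ qubits, so $\bC_1=\bH\bC_0$.

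Now apply the induction hypothesis to $\bC_0$: there is a Clifford $\bG_0$ with $\bG_0\bC_0$ supported on a MCS $S_0$ of $\cH\cW_{2^{m-1}}$, equivalently (Corollary~\ref{C-corSemi}) commuting with all of $S_0$. The aim is to assemble a block-diagonal Clifford $\bG=|0\rangle\langle0|\otimes\bG_0'+|1\rangle\langle1|\otimes\bG_1'$ on $m$ qubits such that $\bG\bC$ fixes the MCS $\langle Z^{(1)}\rangle\cdot S_0$ of $\HW$ pointwise. Unwinding, this asks for: (i) $\bG_0'\bC_0$ and $\bG_1'\bC_1$ both supported on $S_0$; and (ii) $\bG_0'(\bG_1')^\dagger$ a Pauli on $m-1$ qubits --- which is exactly the condition for the block-diagonal $\bG$ to lie in $\cl_N$. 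Taking $\bG_0'=\bG_0$ and writing $\bG_1'\bC_1=(\bG_1'\bH\bG_0^\dagger)(\bG_0\bC_0)$, condition (i) for the second block becomes: $\bG_1'\bH\bG_0^\dagger$ is a Clifford supported on $S_0$ (an ``$S_0$-diagonal'' Clifford, in the language of Corollary~\ref{cor}), and combined with (ii) this says exactly that $\bG_0\bH\bG_0^\dagger$ factors as a Pauli times an $S_0$-diagonal Clifford.

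Establishing this last compatibility is the step I expect to be the main obstacle. It cannot follow from ``$\bC_0,\bC_1\in\cC^{(3)}$ and $\bH$ a Clifford'' alone, since for an arbitrary Clifford $\bH$ one could not assemble a Clifford in this way (a controlled-Hadamard, for instance, is not Clifford); one must exploit the remaining consequence of $\bC\in\cC^{(3)}$, namely that $\bC(X^{(1)}\otimes\bE')\bC^\dagger\in\cl_N$ for \emph{every} Pauli $\bE'$, which forces the Clifford $\bH(\bC_0\bE'\bC_0^\dagger)\bH(\bC_0\bE'\bC_0^\dagger)^\dagger$ to be a Pauli for all $\bE'$ and thereby constrains $\bH$ enough (after possibly re-choosing $\bG_0$ and $S_0$ among the options allowed by the induction hypothesis) to produce the required factorisation. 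Once this is in place, $\bG$ is a genuine Clifford, $\bG\bC=|0\rangle\langle0|\otimes(\bG_0\bC_0)+|1\rangle\langle1|\otimes(\bG_1'\bC_1)$ is supported on $\langle Z^{(1)}\rangle\cdot S_0$, a MCS of $\HW$, and the induction closes against the $m\le 2$ base case.
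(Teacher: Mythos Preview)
Your overall strategy matches the paper's exactly: induct on $m$, use Lemma~\ref{lem:C3fixesPauli} to arrange that (a Clifford multiple of) $\bC$ commutes with a fixed Pauli, pass to an $(m-1)$-qubit situation, apply the hypothesis there, and finish via Corollary~\ref{C-corSemi}. Your explicit block decomposition $\bC=\bC_0\oplus\bC_1$ is precisely the paper's stabilizer-code language unpacked: your $\bC_0$ is the paper's logical operator $\overline{\bC}_\bH$ (the restriction of $\bH\bC$ to the code space of the $[\![m,m-1]\!]$ code with stabilizer $\langle\bE\rangle$), and your $\bG_0$ is the paper's $\overline{\bG}'$.

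The gap is the one you yourself flag. After applying the hypothesis to $\bC_0$ you must force the second block $\bC_1=\bH\bC_0$ to become supported on the \emph{same} MCS $S_0$ after a compatible Clifford correction $\bG_1'$, and you do not carry this out --- you only indicate that the extra constraints coming from $\bC(X^{(1)}\otimes\bE')\bC^\dagger\in\cl_N$ should suffice, possibly after re-choosing $\bG_0$ and $S_0$. As written this is a genuine hole: nothing you have established forces $\bG_0\bH\bG_0^\dagger$ to normalize $S_0$ up to a Pauli, and the sketch of how the remaining $\cC^{(3)}$ constraints would close it is not an argument.

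The paper never splits into two blocks. It works purely with the single logical operator $\overline{\bC}_\bH$, obtains $\overline{\bG}'$ and a logical MCS $\langle\overline{\bE}_1,\ldots,\overline{\bE}_{m-1}\rangle$ from the induction hypothesis, and then \emph{lifts}: it takes any $m$-qubit Clifford $\bG'$ that is a physical realization of $\overline{\bG}'$ (commuting with $\bE$ and acting on logical Paulis as $\overline{\bG}'$ does) and asserts that $\bG'(\bH\bC)$ commutes with the physical realizations $\bE_i$ of the $\overline{\bE}_i$. In your coordinates this assertion is exactly the existence of a suitable $\bG_1'$ --- your ``main obstacle'' --- and the paper treats it as a consequence of the logical-operator formalism rather than via the explicit factorisation of $\bG_0\bH\bG_0^\dagger$ that you propose. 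So the crux you isolate is real; the paper's route through it is the stabilizer-code lifting step, not the block-by-block compatibility argument you outline.
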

\begin{proof}
% From Lemma~\ref{lem:C3fixesPauli} we know that there always exists some Clifford correction to $\bC$ that makes it commute with some Pauli matrix.
% Since this theorem is also up to such a Clifford correction, we consider WLOG that $\bC$ commutes with some $\bE \in \HW$.
From Lemma~\ref{lem:C3fixesPauli} we know that there exists some Clifford $\bH$ such that $\bH \bC$ commutes with some $\bE \in \HW$.
Now consider the group $S = \langle \bE \rangle$ and its normalizer in $\HW$ which we denote by $S^{\sperp}$.
Then, since $\bH \bC$ preserves $S$ under conjugation, it is a valid logical operator for the $[\![ m,m-1 ]\!]$ stabilizer code defined by $S$, i.e., it maps code states ($+1$ eigenvectors of $\bE$) to code states.
Denote this logical $(m-1)$-qubit operation realized by $\bH \bC$ as $\overline{\bC}_{\bH}$. Since $\bH\bC \in \cC^{(3)}$, we know that $\bH\bC$ must satisfy the necessary conjugation conditions on logical Paulis as determined by $\overline{\bC}_\bH$. These conditions can only correspond to physical realizations of logical Clifford gates (since physical Cliffords cannot realize anything above the second level $\cl_N = \cC^{(2)}$). Hence, we conclude that $\overline{\bC}_\bH \in \cC^{(3)}$ (in the logical space).

First, it has already been shown in~\cite{ZCC08} that operations in $\cC^{(3)}$ are semi-Clifford for $1$ and $2$ qubits.
Using Corollary~\ref{C-corSemi}, this automatically means that up to some Clifford correction such gates are supported on a MCS.
Therefore, we consider the induction hypothesis that for $(m-1)$ qubits, any $\cC^{(3)}$ element is supported on a MCS, up to multiplication by some Clifford.
Applying this hypothesis for $\overline{\bC}_{\bH}$ above, we see that there exists some $(m-1)$-qubit logical Clifford $\overline{\bG}'$ such that $\overline{\bG}'\, \overline{\bC}_{\bH}$ is supported on a MCS (of size $2^{m-1}$).
Note that a logical Clifford operation is defined by its action on logical Paulis.

Let this MCS be generated by logical $(m-1)$-qubit Paulis $\overline{\bE}_1, \overline{\bE}_2, \ldots, \overline{\bE}_{m-1}$, and let $\bE_1, \bE_2, \ldots, \bE_{m-1} \in S^{\sperp}$ form their respective physical $m$-qubit realizations in $\HW$.
These realizations are automatically defined once $2(m-1)$ $m$-qubit Pauli operations are chosen to be the appropriate physical realizations of the logical $X$ and $Z$ on the $(m-1)$ logical qubits.
As $\overline{\bG}'\, \overline{\bC}_{\bH}$ is supported on $\overline{\bE}_i$'s, it clearly commutes with each one of them.
Hence, by taking $\bG'$ to be an $m$-qubit Clifford that forms a physical realization of $\overline{\bG}'$, we see that $\bG' (\bH \bC)$ commutes with each $\bE_i$.
Note that, by definition of realizing a logical gate, such a $\bG'$ must preserve the stabilizer $S$, i.e., commute with $\bE$, and act on $\bE_i$ as $\overline{\bG}$ acts on $\overline{\bE}_i$.

Finally, consider the group $\langle \bE, \bE_1, \bE_2, \ldots, \bE_{m-1} \rangle$.
This is clearly a MCS and $(\bG' \bH) \bC$ fixes it pointwise. 
% because $\bG$ must commute with $\bE$ for it to realize the logical Clifford $\overline{\bG}$.
Therefore, by applying Corollary~\ref{C-corSemi} we see that $\bG \bC := (\bG' \bH) \bC$ is supported on this MCS.
This completes the induction. % and proves the theorem.  
\end{proof}

Lemma~\ref{lem:C3fixesPauli} constitutes a crucial property of the third level $\cC^{(3)}$ of the Clifford hierarchy. 
This property is of course exclusive to the third level because we highly rely on the fact that $\im \phi_\bC$ is a subgroup of $\cl_N$. 
This in turn enables an induction argument on the number of qubits, rather than the typical induction arguments on the levels of the Clifford hierarchy. It is worth mentioning that the ``up to Clifford'' is indeed necessary throughout the paper. For instance, with regards to Lemma~\ref{lem:C3fixesPauli}, the physical Clifford permutation
\begin{equation}
    \bG = \left[\!\!\begin{array}{cccc}
         1&0&0&0 \\
         0&0&0&1 \\
         0&1&0&0 \\
         0&0&1&0
    \end{array}
    \!\!\right]
\end{equation}
does not fix (commute with) \emph{any} Pauli matrix. Similarly, one can easily produce other instances of examples that require the ``Clifford correction''.

If $\bC \in \cC^{(3)}$ is supported on a MCS $S$ then it trivially fixes the span of $S$, and therefore $\bC$ is a generalized semi-Clifford matrix.
The converse is not true since, for instance, even the $\rm{CNOT}$ gate fixes $Z_4$ but obviously is not supported on $Z_4$. 
In fact, any gate $\bG_D(\bP)$ fixes $Z_N$ by construction but is not supported on $Z_N$ (see Proposition~\ref{P-lem}(1)).

\begin{cor}[\mbox{\cite[Thm.~1.1]{BS08}}]
\label{C-BS08}
Every $\bC \in \cC^{(3)}$ is a generalized semi-Clifford matrix.
\end{cor}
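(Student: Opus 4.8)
The plan is to read this off directly from Theorem~\ref{T-main}, which does all the heavy lifting; the corollary needs only one short structural observation on top of it. First I would apply Theorem~\ref{T-main} to the given $\bC \in \cC^{(3)}$ to obtain a Clifford $\bG$ and a maximal commutative subgroup $S \subset \HW$ with $\supp(\bG\bC) \subseteq S$, so that $\bG\bC = \sum_{\bE \in S} \alpha_\bE \bE$ for suitable $\alpha_\bE \in \C$.

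Next I would record that $\spann(S)$ is a $\ast$-subalgebra of $\cM_N(\C)$: since $S$ is an abelian subgroup of $\HW$, the product of two of its elements lies in $S$ up to a sign (by~\eqref{e-Eab} and Remark~\ref{R-DEab}), and so does the adjoint of any element; hence any product of elements of $\spann(S)$ lies again in $\spann(S)$. Because $\bG\bC$ and $(\bG\bC)^\dagger$ both lie in $\spann(S)$, conjugation by $\bG\bC$ carries $\spann(S)$ into itself, i.e.\ $(\bG\bC)\,\spann(S)\,(\bG\bC)^\dagger \subseteq \spann(S)$.

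Finally I would untwist the Clifford correction: from $(\bG\bC)\,\spann(S)\,(\bG\bC)^\dagger \subseteq \spann(S)$ we get $\bC\,\spann(S)\,\bC^\dagger \subseteq \bG^\dagger\,\spann(S)\,\bG = \spann(\bG^\dagger S \bG)$, and $\bG^\dagger S \bG$ is again a MCS because the Clifford group permutes maximal commutative subgroups. Thus $\bC$ maps the span of the MCS $S$ into the span of the MCS $\bG^\dagger S \bG$, which is exactly the definition of a generalized semi-Clifford matrix.

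I do not expect a genuine obstacle here: everything hard is already inside Theorem~\ref{T-main} (and behind it Lemma~\ref{lem:C3fixesPauli} together with the induction on the number of qubits). The only points needing care are bookkeeping — getting the direction of the Clifford conjugation right, and checking that the sign ambiguities in $\bE(\bc)\bE(\bc')$ never leave $\spann(S)$ — both of which are immediate from the relations set up in the Preliminaries. This essentially formalizes, with no extra work, the remark immediately preceding the corollary, which also records (via the CNOT example) that the converse implication fails.
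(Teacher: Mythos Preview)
Your proposal is correct and follows essentially the same route as the paper's own proof: apply Theorem~\ref{T-main} to get $\bG\bC$ supported on a MCS $S$, observe that this forces $\bG\bC$ to fix $\spann(S)$ under conjugation, and then untwist by $\bG$ to conclude that $\bC$ sends $\spann(S)$ to $\spann(\bG^\dagger S\bG)$. You simply spell out the middle step (that $\spann(S)$ is a $\ast$-subalgebra) in a bit more detail than the paper, which asserts it in one clause.
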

\begin{proof}
By Theorem~\ref{T-main}, there exists a Clifford correction $\bG$ such that $\bG\bC$ is supported on a MCS $S$. 
Then $\bG\bC$ 
%\nr{$\bG\bC$?} 
fixes the span of $S$, and $\bC$ maps the span of $S$ to the span of $\bG^\dagger S \bG$. Thus $\bC$ is generalized semi-Clifford. 
\end{proof}

%\begin{cor}
%\label{C-BS08}
%Assume that $\bC \in \cC^{(3)}$ fixes the span of $Z_N$. Then $\bC$ is supported on $Z_N$.
%\end{cor}
%\begin{proof}
%Let $\bC \in \cC^{(3)}$ fix the span of $Z_N$, and consider the homomorphism~\eqref{e-phiC}. 
%Then $\im \phi_\bC\subset \cl_N$ consists of diagonal matrices. Proceeding as in the proof of Lemma~\ref{lem:C3fixesPauli}, we can find a diagonal $\bE \in Z_N$ that commutes with $\bC$. 
%The proof is concluded by the induction argument of Theorem~\ref{T-main}.
%\end{proof}

%%%%%%%%%%%%%%%%%%%%%%%%%%%%%%%%%%%%%%%%%%%%%%%%%%%%%%%%%%%%

\section{Conclusions and Future Research}

In this paper we study the Clifford hierarchy via the Pauli/Weyl expansion/support. 
First, we consider the Clifford group, that is, the second level of the hierarchy. We show that every element of the group is supported on a subgroup of the Pauli group (or a coset, when traceless).
Additionally, we give a closed form description of the support of standard group elements, and show that the coefficients are determined by a quadratic form modulo 4.
We argue that the Hermitian elements of the group play a prominent role on understanding the third level of the hierarchy. 
For this reason, we treat them separately, and among other things, we show that they are supported on a \emph{commutative} subgroup of the Pauli group (or a coset). 
Next, we consider the third level of the hierarchy. Our treatment is up to Clifford equivalence, which, at any rate, preservers the levels of the hierarchy. 
We show that, up to such equivalence, every third level matrix commutes with at least one Pauli matrix. 
This constitutes the main building block of a powerful induction argument on the number of qubits, which we use to prove that every third level matrix, up to Clifford equivalence, is supported on a maximal commutative subgroup of the Pauli group.
We believe that such induction argument can be further exploited in various aspects of quantum computation and quantum error-correction.

In future research, we will consider the behaviour of the support under elementary transformations such as multiplication and conjugation (which, surprisingly, is unknown). 
This, among other things, would give a closed form description of the support of \emph{any} Clifford group element. 
Next, with the ultimate goal of completely characterizing the third level of the hierarchy, we will consider the converse. 
That is, finding sufficient conditions under which a unitary $\bU$, supported on some MCS, belongs to the third level. 
We expect the coefficients to be (scaled) eighth roots of unity that are perhaps determined by third-order Reed-Muller codes. Finally, we will use the structural results of this paper to develop flag gadgets for third level operators, as well as reduce circuit complexity for these operators.

%%%%%%%%%%%%%%%%%%%%%%%%%%%%%%%%%%%%%%%%%%%%%%%%%%%%%%%%%%%%%%%%%%%%%%%%%

\section*{Acknowledgements}
The work of TP and OT was supported in part by the Academy of Finland under the grant 334539. The work of NR and RC was supported in part by NSF under the grant CCF1908730. TP and OT thank Robert Calderbank for his hospitality during visits to Duke University.

\bibliographystyle{plainnat}
\bibliography{IEEEabrv,ITW2020}

\end{document}